 \renewcommand\@biblabel[1]{#1} 
\newcommand{\bahar}[1]{\textcolor{orange}{\textbf{Bahar says:} #1}}
\newcommand{\haris}[1]{\textcolor{red}{\textbf{Haris says:} #1}}
\newcommand{\profileset}{\mathbf{{A}}} 
\newcommand{\profile}{\mathcal{A}} 
\newcommand{\plausible}{\emph{plausible}\xspace}
\newcommand{\jrProb}{\textsc{JR-Probability}\xspace}
\newcommand{\isPosJR}{\textsc{IsPossJR}\xspace}
\newcommand{\isNecJR}{\textsc{IsNecJR}\xspace}
\newcommand{\existsPosJR}{\textsc{ExistsPossJR}\xspace}
\newcommand{\existsNecJR}{\textsc{ExistsNecJR}\xspace}
\begin{document}

\title{Approval-Based Committee Voting under Uncertainty}

\author{
Hariz Aziz\inst{1} \and
Venkateswara Rao Kagita\inst{2} \and
Baharak Rastegari \inst{3} \and
Mashbat Suzuki\inst{1}
}
\authorrunning{H. Aziz et al.}
\institute{
University of New South Wales, Sydney, Australia \\ \email{\{haris.aziz,mashbat.suzuki\}@unsw.edu.au} \and
National Institute of Technology, Warangal, India \\ \email{venkat.kagita@nitw.ac.in} \and
University of Southampton, Southampton, UK \\ \email{b.rastegari@soton.ac.uk}
}

\maketitle

\begin{abstract}
We study approval-based committee voting in which a target number of candidates are selected based on voters' approval preferences over candidates. In contrast to most of the work, we consider the setting where voters express uncertain approval preferences and explore four different types of uncertain approval preference models. For each model, we study the problems such as computing a committee with the highest probability of satisfying axioms such as justified representation.
\keywords{Approval preferences, Committee voting, ABC voting }
\end{abstract}

\section{Introduction}


Multi-agent collective decision making is one of the fundamental issues in computer science and economics.
We consider the ubiquitous problem of selecting a given number of candidates based on voters' approval preferences over candidates. The problem has been referred to as \textit{approval-based committee (ABC) voting } or multi-winner voting under approval preferences~\citep{Kilg10a,LaSk23a}. 
Prominent concerns in ABC voting include proportional representation of voters and finding committees that are desirable and representative. One of the most widely studied concepts for fairness in ABC voting is\textit{ justified representation (JR)} and stronger variants~\citep{AEH+18a,BFJL16a, LaSk23a}. 

Almost all of the focus in ABC voting has been on voters with deterministic preferences. While proportional representation in ABC voting has received tremendous interest, especially in recent years (see, e.g., ~\citep{LaSk23a}),  there is little prior work on computing desirable representative committees under \textit{uncertain} preferences.  Uncertain approval preferences are useful when the central planner only has imprecise information about the voters' preferences. This estimated information could be based on historical preferences, past selections, or online clicks or views. For example, if a voter $i$ has selected a certain candidate $c$ 70\% of the times in previous situations, one could use that information to assume that the approval probability of voter $i$ for candidate $c$ is 0.7. The information could also be based on situations where each voter actually represents a group of people who may not have identical approval preferences. For example, if 60\% of the group approved a certain candidate, one could assume that approval probability of voter $i$ for candidate $c$ is 0.6. 
The estimated information could also represent the confidence of a machine learning or a recommendation technique that is employed to predict the unobserved (dis)approvals of voters for candidates.
Collaborative filtering techniques can be applied in scenarios with a partially observed matrix depicting voter-candidate interactions through approval/disapproval entries~\cite{himabindu2018conformal,kagita2017conformal}. These techniques enable the prediction of unknown entries with a confidence level of $(1-\varepsilon)$, where $\varepsilon$ represents the probability of making an error.
To illustrate, consider an algorithm that predicts approval with a confidence level of 70\%. In this case, the probability that a voter approves a candidate is 0.7, while the probability of disapproval is 0.3. 

In this paper, we initiate work on problems where voters' uncertain approval preferences are taken into account to compute desirable committees that satisfy representation with high probability.
We consider four different types of uncertain approval preferences.
In the \emph{Joint Probability model}, there is a probability distribution of approval profiles.
In the \emph{Lottery model}, each voter has an independent probability distribution over approval sets.
In the \emph{Candidate-Probability model}, there is a probability for a given voter approving a given candidate.
We also consider a restricted version of the 
latter
model.
For each of the uncertain approval models, we consider problems such as computing a committee with the highest probability of being JR.
Such an outcome can be viewed as being  fair under uncertain information.

\paragraph{Contributions}

We undertake a detailed computational complexity analysis of several problems with respect to the four preference uncertainty models. 
For each of the preference models, we consider problems such as  {\sc JR-Probability} (computing the probability that a given committee satisfies JR); {\sc IsPossJR} (deciding whether a given committee satisfies JR  with non-zero probability); {\sc IsNecJR} (deciding whether a given committee satisfies JR with probability one); {\sc ExistsNecJR} (deciding whether there exists a committee that satisfies JR with probability one); and {\sc MaxJR} (computing a committee  that has the highest probability of satisfying JR). Our central focus is on justified representation (JR) but we also present several results for stronger proportional representation axioms such as PJR and EJR. 
We show that {\sc ExistsNecJR} is NP-complete for all the models that we consider. {\sc IsPossJR} is computationally hard for the lottery model and polynomial-time solvable for the other models.  Our results are summarized in Table~\ref{table:summary:uncertainABC}. 

         \begin{table*}[t]
\setlength{\tabcolsep}{-1pt}
             \footnotesize
             \centering
             \scalebox{0.75}{
             \begin{tabular}{p{2.5cm} @{\qquad} p{2.5cm} @{\qquad} p{2.8cm} @{\qquad} p{3cm} @{\qquad} p{3cm}}
                 \toprule
              &\textbf{Joint Prob.}&\textbf{Lottery}&\textbf{Candidate Prob.}&\textbf{3VA} \\
    	   \textbf{Problems}&&\\
                  \midrule\\
              {\sc JR-Probability} &in P & NP-h (C.~\ref{cor:JR-prob-lottery}) &  \#P-complete (T. \ref{th:counting}) & \#P-complete (T. \ref{th:counting}), in P  if $p_{i,c}\in \{0,1\}$, $\forall i \in V$ \& $\forall c\in W$ (T.~\ref{thm:JR-prob-3VA-Pcase1}) \\
                  \midrule 
              {\sc IsPossJR} &in P (remark) & NP-c (T.~\ref{thm:IsPossJr-Lottery}) &in P (Mod. of \citep{IBK22a}) &in   P \citep{IBK22a} \\
                  \midrule 
              \begin{tabular}{l} {\hspace{-0.3cm} \sc IsPossEJR/}\\\hspace{-0.2cm}{\sc IsPossPJR}\end{tabular} & coNP-c (C.~\ref{cor:IsNecEJR-IsPossEJR}) & coNP-c (C.~\ref{cor:IsNecEJR-IsPossEJR}) &   coNP-c (C.~\ref{cor:IsNecEJR-IsPossEJR}) & coNP-c (C.~\ref{cor:IsNecEJR-IsPossEJR}) \\

                  \midrule

              {\sc IsNecJR} & in P (remark) & in P (T.~\ref{thm:IsNecJR-Lottery}) &in P (Mod. of \citep{IBK22a}) &in  P \citep{IBK22a}  \\
        \midrule
              \begin{tabular}{l} {\hspace{-0.3cm} \sc IsNecEJR/}\\\hspace{-0.2cm}{\sc IsNecPJR}\end{tabular} & coNP-c (C.~\ref{cor:IsNecEJR-IsPossEJR}) & coNP-c (C.~\ref{cor:IsNecEJR-IsPossEJR}) &   coNP-c (C.~\ref{cor:IsNecEJR-IsPossEJR}) & coNP-c (C.~\ref{cor:IsNecEJR-IsPossEJR}) \\
                  \midrule
              {\sc ExistsNecJR} & NP-c (T.~\ref{thm:ExistsNecJRLotteryJoint}), even if there are only 2 \plausible approval profiles (T.~\ref{thm:ExistsNecJRjoint2}) & NP-c (T.~\ref{thm:ExistsNecJRLotteryJoint}), in P if all \plausible approval sets are of size 1 (T.~\ref{thm:IsNecJR-Lottery-P}) & NP-c (T.~\ref{thm:ExistsNecJRCandidate}), in P if $0<p_{i,c}<1$ $\forall i\in V \& \forall c\in C$ (Cor. of L.~\ref{lem:fullset})& NP-c (C.~\ref{thm:ExistsNecJR3VA}), in P if $0<p_{i,c}<1$ $\forall i\in V \& \forall c\in C$ (Cor. of L.~\ref{lem:fullset}) \\ 
 \midrule
              \begin{tabular}{l} {\hspace{-0.3cm} \sc ExistsNecEJR/}\\\hspace{-0.2cm}{\sc ExistsNecPJR}\end{tabular} & coNP-h (T.~\ref{thm:ExistsNecEJR-Lottery-JP}) & coNP-h (T.~\ref{thm:ExistsNecEJR-Lottery-JP}) &   coNP-h  (T.~\ref{thm:ExistsNecEJR-3VA-CP})  & coNP-h  (T.~\ref{thm:ExistsNecEJR-3VA-CP})  \\ 
                  \midrule
\begin{tabular}{l} {\hspace{-0.3cm} \sc MaxJR}\end{tabular} & NP-h (C.~\ref{cor:MaxJR})  &  NP-h (C.~\ref{cor:MaxJR})  &  NP-h(C.~\ref{cor:MaxJR})   & NP-h (C.~\ref{cor:MaxJR}) \\
                  \midrule
\begin{tabular}{l} {\hspace{-0.3cm} \sc MaxEJR/}\\ \hspace{-0.2cm}{\sc MaxPJR}\end{tabular} & coNP-h (C.~\ref {cor:MaxEJR-MaxPJR})  & coNP-h (C.~\ref {cor:MaxEJR-MaxPJR}) & coNP-h (C.~\ref {cor:MaxEJR-MaxPJR})  & coNP-h (C.~\ref {cor:MaxEJR-MaxPJR}) \\
                 \bottomrule
             \end{tabular}
        }
         \caption{Summary of results.  Please see Sections \ref{sec:preliminaries} for notations and definitions. }

             \label{table:summary:uncertainABC}
         \end{table*}
\section{Related Work}

Our central model is  approval-based committee (ABC) voting. One of the main questions within ABC voting  is how to produce committees which choose candidates ``proportional'' to the support they receive from voters.
\citet{ABC+16a} initiated a study of approval-based committee voting based on the idea of ``justified representation'' for cohesive groups.
The study has led to a  large body of work focusing on representation axioms and voting rules which produce committees satisfying these axioms \citep{AEH+18a,BFJL16a} look more at query complexity and approximate (E)JR
For a detailed survey of the recent work on approval-based committee voting, we refer the readers to the book of \citet{LaSk23a}. \citet{BFKN19a} take an experimental view of committees satisfying JR/PJR/EJR and also examine the complexity of counting committees satisfying these properties.

Uncertainty in preferences in voting problems has received interest in many papers.
\citet{KoLa05a} study winner determination with incomplete preferences under ranking- based single-winner rules (such as plurality and Borda). \citet{HAK+12a}  examine the probability of a particular candidate winning an election under uncertain preferences for various voting rules such as Plurality and Borda. {\citet{MeLa19a} studied facility location under uncertain information.}
Our work combines aspects of uncertainty in preferences with approval-based committee voting.

The two papers, recently published, that are most closely related to ours are \cite{IBK22a} and \cite{HKPTW23}.
In one of the first detailed papers to consider approval-based committee voting under uncertain preferences,  \citet{IBK22a} proposed several models of incomplete information. Unlike our work, the input information in their paper does not have quantitative information about probabilities and the authors do not examine problems such as maximizing the probability of satisfying an axiomatic property such as justified representation. Three of the four uncertain preference models that we consider are not explored by \citet{IBK22a}. One of the models that we consider (3VA) was also studied by \citet{IBK22a}.  For their incomplete information models, 
they consider the problem of checking whether a given committee is possibly or necessarily JR. For our uncertain preferences models, we also consider other problems such as checking the existence of possibly or necessarily JR and also computing the committee that maximizes the probability of satisfying JR. \citet{HKPTW23} looked at query complexity and approximating JR and EJR for the problem of selecting comments based on agreements and disagreements in platforms for online civic participation. Their set-up is equivalent to ABC voting. In their work voters (users of the platform) are randomly chosen and queried to express their approval/disapproval on a small set of candidates (comments), generating uncertaint/noisy information. Previously, \citet{BLMR13}  also examined single and multi-winner voting  under uncertain approvals.  \citet{ZKO21a} examined the problem of  checking whether an incomplete approval profile admits a completion within a certain restricted domain of approval preferences.

Uncertain preferences have been explored in other contexts as well, including matching and allocations.  \citet{ABG+20a} study the problem of computing stable matchings under uncertain preferences in the context of two-sided matching. 
 \citet{ABH+19} undertake a detailed complexity analysis of Pareto optimal allocation under uncertain preferences. \citet{RSS24} examine the envy-free allocation under uncertain preferences.  
The Lottery and Joint Probability models that we study are inspired by similar models in the context of allocation and matching problems mentioned above.
The Candidate-Probability model that we introduce in the paper captures scenarios where the approval probability may be based on a history of past choices.

\section{Preliminaries}
\label{sec:preliminaries}

For any~$t \in \mathbb{N}$, let $[t] \coloneqq \{1, 2, \dots, t\}$.
An \emph{instance} of the (deterministic) approval-based committee (ABC) voting is represented as a tuple $(V, C, \profile, k)$, where:
\begin{itemize}
\item $V = [n]$ and $C = [m]$ are the set of \emph{voters} and \emph{candidates}, respectively.

\item An approval set is a subset of candidates.
Denote by~$A_i$ voter~$i$'s approval set.
$\profile = (A_1, A_2, \dots, A_n)$ is voters' \emph{approval profile}.
The set of all possible approval profiles is denoted as~$\profileset$.

\item $k$ is a positive integer representing the committee size.
\end{itemize}
A \emph{winning committee}~$W \subseteq C$ is of size~$k$. In the following, we introduce proportional representation axioms, and begin with the central notion of this paper, which is \emph{justified representation (JR)}~\citep{ABC+16a}.

\begin{definition}[JR]
\label{def:JR}
Given instance $(V, C, \profile, k)$, a committee~$W$ is said to satisfy \emph{justified representation (JR)} if for every group of voters~$V' \subseteq V$ with $|V'| \geq \frac{n}{k}$ and $\bigcap_{i \in V'} A_i \neq \emptyset$, it holds that $A_i \cap W \neq \emptyset$ for some~$i \in V'$.
\end{definition}

JR has been strengthened to stronger notions, of which \emph{proportional justified representation (PJR)}~\citep{SFF+17a} and \emph{extended justified representation (EJR)}~\citep{ABC+16a} are two important ones.
In order to reason about the two notions, we first introduce the concept of a cohesive group.
For any positive integer~$\ell$, a group of voters~$V' \subseteq V$ is said to be \emph{$\ell$-cohesive} if $|V'| \geq \ell \cdot \frac{n}{k}$ and $|\bigcap_{i \in V'} A_i| \geq \ell$.

\begin{definition}[PJR and EJR]
Given instance $(V, C, \profile, k)$, a committee~$W$ is said to satisfy \emph{PJR} (resp., \emph{EJR}) if for every positive integer~$\ell$ and every $\ell$-cohesive group of voters~$V' \subseteq V$, it holds that $|\left( \bigcup_{i \in V'} A_i \right) \cap W| \geq \ell$ (resp., $|A_i \cap W| \geq \ell$ for some~$i \in V'$).
\end{definition}

As the definitions suggest, EJR $\implies$ PJR $\implies$ JR.

\subsection{Uncertain Preference Models}
\label{ssec:uncertain-preference-models}

We consider ABC voting where voters may be \emph{uncertain} about their approval ballots.
Specifically, we consider the following uncertainty models:
\begin{enumerate}
\item \textbf{Joint Probability model:}
We are given a probability distribution~$\Delta(\profileset) \coloneqq \{(\lambda_r, \profile_r)\}_{r \in [s]}$ over~$s$ approval profiles with $\sum_{r \in [s]} \lambda_r = 1$, where for each~$r \in [s]$, the approval profile~$\profile_r$ is associated with a positive probability~$\lambda_r>0$. {We write $\Delta(\profile_r) = \lambda_r$.}

\item \textbf{Lottery model:} For each voter $i\in V$, we are given a probability distribution~$\Delta_i(2^C) \coloneqq \{(\lambda_r, S_r)\}_{r \in [s_i]}$ over~$s_i$ approval sets with $\sum_{r \in [s_i]} \lambda_r = 1$, where for each~$r \in [s_i]$, the candidate set~$S_r\subseteq C$ is associated with a positive probability~$\lambda_r>0$. {We write $\Delta_i(S_r) = \lambda_r$.} 
We assume that the probability distributions of voters are independent.

\item \textbf{Candidate-Probability model:}
Each voter $i$ approves each candidate~$c$ 
independently with probability~$p_{i, c}$, i.e., for each~$i \in V$ and each~$c \in C$, $p_{i, c} \in [0, 1]$. 
\item \textbf{Three-Valued Approval (3VA) model:}
Each voter specifies a subset of candidates that are approved and a subset of candidates that are disapproved. The remaining candidates could be approved or disapproved with equal probability. 
That is, $p_{i,c}\in\{0, 0.5, 1\}, \forall i\in V, c\in C$, wherein 0 denotes disapproval, 1 indicates approval, and 0.5 represents unknown.
\end{enumerate}

The Joint Probability and Lottery models have been studied in other contexts including two-sided stable matching problem and assignment problem~\citep{ABH+19,ABG+20a}.
The 3VA model has been studied in the ABC voting~\citep{IBK22a}.
The Candidate-Probability model is a direct generalisation of the 3VA model. 
Under Candidate-Probability and 3VA models, $p_{i, c}$'s are assumed to be independent.
We refer to an approval profile that can occur with positive probability, under any of the uncertainty models, as a \textbf{\plausible} approval profile.

\begin{proposition}
There is a unique Joint Probability model representation for preferences given in the Lottery model.
\end{proposition}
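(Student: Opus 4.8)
The plan is to exhibit the product distribution as a Joint Probability model that represents the given Lottery model, and then argue that it is the only one. First I would pin down what ``represents'' should mean here: a Joint Probability model $\Delta(\profileset) = \{(\lambda_r, \profile_r)\}_{r \in [s]}$ represents a Lottery model $\{\Delta_i(2^C)\}_{i \in V}$ if, for every approval profile $\profile = (A_1, \dots, A_n)$, the probability that $\Delta(\profileset)$ assigns to $\profile$ equals the probability that $\profile$ is realized when each voter $i$ independently samples its approval set from $\Delta_i(2^C)$.

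For existence, I would define $\Delta(\profileset)$ to have support $\{\profile = (A_1,\dots,A_n) : \Delta_i(A_i) > 0 \text{ for all } i \in V\}$ and, on this support, $\Delta(\profile) = \prod_{i \in V} \Delta_i(A_i)$. This is a valid probability distribution: every listed probability is strictly positive, and $\sum_{\profile} \prod_{i \in V} \Delta_i(A_i) = \prod_{i \in V}\bigl(\sum_{S \subseteq C} \Delta_i(S)\bigr) = 1$ by expanding the product of the per-voter normalizations. By independence of the voters' lotteries, the probability that the realized profile equals $\profile$ is exactly $\prod_{i \in V} \Delta_i(A_i)$ (and $0$ precisely when some $\Delta_i(A_i) = 0$), so this $\Delta(\profileset)$ represents the Lottery model.

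For uniqueness, let $\Delta'(\profileset)$ be any Joint Probability model representing the same Lottery model. By the definition of ``represents'', for every profile $\profile$ the probability $\Delta'$ assigns to $\profile$ must equal $\prod_{i \in V} \Delta_i(A_i)$, which is exactly $\Delta(\profile)$. Since a Joint Probability model is, by definition, just the collection of profiles receiving positive probability paired with those probabilities, $\Delta'$ and $\Delta$ have the same support and coincide on it; as the listed profiles are distinct, the two models are identical. I do not expect a real obstacle here: the only points needing care are making the notion of representation explicit and observing that the independence built into the Lottery model forces the joint distribution to be the product of the marginals, which is in turn uniquely encoded by its list of positive atoms.
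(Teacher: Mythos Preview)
Your proposal is correct and follows the same core idea as the paper: the Joint Probability representation is obtained by taking the product $\Delta(\profile) = \prod_{i\in V} \Delta_i(A_i)$ of the per-voter lottery probabilities. The paper's proof is a one-line sketch that simply states this formula; your version is more thorough in that you make the meaning of ``represents'' explicit, verify that the product defines a valid probability distribution, and spell out why uniqueness follows, but there is no substantive difference in approach.
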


\begin{proof}
The Joint Probability representation entails the probability distribution across each distinct profile of approval sets for all voters. In the Lottery model, the probability distribution of voters for various approval sets is known. The Joint Probability of a profile comprising approval sets from all voters can be obtained by multiplying the individual voter's approval probabilities i.e., $\Delta(\profile) = \prod_i \Delta_i(\profile_i), \forall \profile\in \profileset$.
\end{proof}

\begin{proposition}
There is a unique Lottery model representation for preferences given in Candidate-Probability model.
\end{proposition}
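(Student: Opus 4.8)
The plan is to exhibit the natural candidate representation and then argue it is forced. Given a Candidate-Probability instance with approval probabilities $p_{i,c}$, define for each voter $i \in V$ and each set $S \subseteq C$ the quantity
\[
\pi_i(S) \coloneqq \prod_{c \in S} p_{i,c} \cdot \prod_{c \in C \setminus S} (1 - p_{i,c}),
\]
and let the Lottery distribution $\Delta_i$ assign probability $\pi_i(S)$ to $S$, retaining in the support only those $S$ with $\pi_i(S) > 0$. First I would check this is a legitimate Lottery representation: each $\pi_i(S) \ge 0$, and $\sum_{S \subseteq C} \pi_i(S) = \prod_{c \in C}\bigl(p_{i,c} + (1 - p_{i,c})\bigr) = 1$ by expanding the product, so $\Delta_i$ is a probability distribution over $2^C$; moreover the voters remain independent, matching the Lottery model's assumption.

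Next I would verify that $\Delta_i$ actually represents the same preferences, i.e.\ it reproduces the data of the Candidate-Probability model: the event ``$i$ approves $c$'' has probability $\sum_{S \ni c} \pi_i(S) = p_{i,c}$ (factor out the $c$-term and sum the remaining product to $1$), and distinct candidates are approved independently under $\Delta_i$ because $\pi_i$ factorizes over coordinates. Hence the constructed Lottery model is consistent with the given Candidate-Probability model.

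For uniqueness, suppose $\Delta_i'$ is any Lottery distribution for voter $i$ that represents the same preferences. Representing the Candidate-Probability model means precisely that, under $\Delta_i'$, candidate $c$ is approved with probability $p_{i,c}$ and the approval indicators $\{X_c\}_{c \in C}$ (with $X_c = 1$ iff $c$ lies in the drawn approval set) are mutually independent. A joint distribution of mutually independent $\{0,1\}$-valued random variables equals the product of its marginals, so for every $S \subseteq C$,
\[
\Delta_i'(S) = \Pr\bigl[\forall c \in S:\, X_c = 1,\ \forall c \notin S:\, X_c = 0\bigr] = \prod_{c \in S} p_{i,c} \prod_{c \notin S}(1 - p_{i,c}) = \pi_i(S) = \Delta_i(S).
\]
Applying this to every voter $i$ gives $\Delta_i' = \Delta_i$ for all $i$, establishing uniqueness.

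The main obstacle is not technical difficulty but pinning down the right reading of ``representation'': the statement holds only because the Candidate-Probability model carries an independence assumption across candidates, so a representing Lottery model is required to match not merely the marginals $p_{i,c}$ but the full independent-coordinates joint law; once that is made explicit, uniqueness is exactly the elementary fact that independent marginals determine the joint distribution. I would therefore take care to state this independence requirement precisely at the outset.
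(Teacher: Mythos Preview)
Your proposal is correct and uses the same core construction as the paper: defining $\Delta_i(S) = \prod_{c\in S} p_{i,c} \prod_{c\in C\setminus S}(1-p_{i,c})$. The paper's proof is terser, simply stating this formula without separately verifying it sums to one, reproduces the marginals, or is unique; your additional checks (well-definedness, marginal recovery, and the uniqueness argument via independence of the coordinate indicators) are sound and make explicit what the paper leaves implicit.
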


\begin{proof}
    The probability distribution across approval sets can be deduced from a probability distribution over candidates 
    albeit with an exponential increase in complexity. The probability of voter $i$ approving a candidate set $S$ can be computed as $\Delta_i(S) = \prod_{c\in S}(p_{i,c})\times \prod_{c\in C\setminus S} (1 - p_{i,c})$.
\end{proof}


\emph{Example:} This example illustrates the transition from an instance in the Candidate Probability model, where voters have probabilities for individual candidates, to an instance in the Lottery model, where probabilities are assigned to sets of candidates. Assume there are  3 candidates and one voter. The approval probabilities of voter $1$ for the three candidates are $p_{1, 1} = 0.9,~ p_{1, 2} = 0.6,$ and $p_{1, 3} = 0.5$. The probability of voter $1$ approving a candidate set, say $\{1, 3\}$, is calculated as $\Delta_1({1, 3}) = 0.9 \times (1-0.6) \times 0.5 = 0.18$.

\subsection{Computational Problems}

We are interested in extending the quest for a committee that satisfies JR, PJR or EJR to the realm of uncertainty.
For ease of exposition, we use JR as an exemplary property to introduce the most natural computational problems of interest.

\paragraph{Possible JR}
In \Cref{sec:possible-JR}, we examine computational problems related to \emph{possible JR}. 
Possible JR refers to the property that a committee has non-zero probability of satisfying JR.
Fix any uncertain preference model defined in \Cref{ssec:uncertain-preference-models}.
Given as input voters' uncertain preferences,
\begin{itemize}
\item \isPosJR is the problem of deciding whether a given committee~$W$ satisfies JR with non-zero probability;

\item \existsPosJR is the problem of deciding whether there exists a committee~$W$ that satisfies JR with non-zero probability.
\end{itemize}

\paragraph{Necessary JR}
In \Cref{sec:necessary-JR}, we examine computational problems related to \emph{necessary JR}.
Fix any uncertain preference model.
Given as input voters' uncertain preferences,
\begin{itemize}
\item \isNecJR is the problem of deciding whether a given committee~$W$ satisfies JR with probability~$1$;

\item \existsNecJR is the problem of deciding whether there exists a committee~$W$ that satisfies JR with probability~$1$.
\end{itemize}

\paragraph{JR Probability}
In \Cref{sec:JR-probability}, we examine the problem referred to as \jrProb. It is the  problem of computing the probability that a given committee~$W$ satisfies JR.
Since we are interested in computing desirable committees under uncertainty, we also provide results for the following problem: 
{\sc MaxJR}: 
Computing a committee $W$ that has the highest probability of satisfying JR. 
Note that  \existsNecJR reduces to {\sc MaxJR}. 

Finally, in \Cref{sec:PJR-and-EJR} we examine the same set of computational problems for PJR and EJR.

\section{Possible JR}
\label{sec:possible-JR}

In this section, we examine problems related to the Possible JR. Firstly, we note that {\sc ExistsPossJR}
trivially has a yes answer
for all of our probability models. This is due to the fact that for any \plausible approval profile there exists a committee $W$ that satisfies JR (Theorem 1 in~\cite{ABC+16a}).  Note that $W$ satisfies JR with non-zero probability iff there exists a \plausible  approval profile for which $W$ satisfies JR.

Given a committee $W$, {\sc IsPossJR}
can be solved in polynomial time for the Joint Probability model. The polynomial-time procedure involves taking each \plausible approval profile, which are explicitly provided in the input, in turn and checking whether $W$ satisfies JR for any such profiles. The latter can be done in polynomial time (see Theorem 2 in~\citep{ABC+16a}). This problem is also in complexity class P for the 3VA and candidate-probability models (corollaries of Theorem 13 in~\citep{IBK22a}). Under the lottery model, however,  {\sc IsPossJR} in NP-complete.

\begin{theorem}\label{thm:IsPossJr-Lottery}
{\sc IsPossJR}, the problem of deciding whether a given committee $W$ satisfies JR with non-zero probability,
is NP-complete for the Lottery model, even when each voter 
assigns positive probability to at most three approval sets
and $k=\frac{n}{2}$.
\end{theorem}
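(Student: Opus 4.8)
The plan is to reduce from a standard NP-complete problem; the natural choice is an exact-cover/hitting-set style problem, but given the requirement $k = n/2$ and ``at most three approval sets per voter'', I would start from a structured version of SAT such as 3-SAT or, more conveniently, from \textsc{Restricted Exact Cover by 3-Sets} or a balanced variant of \textsc{Hitting Set}. The key observation driving the reduction is the characterization used in Theorem~2 of~\cite{ABC+16a}: a committee $W$ fails JR for a given deterministic profile iff there is a candidate $c \notin W$ approved by at least $n/k$ voters who each approve no member of $W$. Since membership in NP is immediate (guess, for each voter $i$, one of its $s_i \le 3$ approval sets; this yields a \plausible profile of polynomial size, and JR-checking is polynomial by Theorem~2 of~\cite{ABC+16a}), the whole content is the hardness direction.

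For the hardness direction I would design the instance so that choosing, for each voter, one of its (at most three) approval sets corresponds exactly to choosing a truth assignment / a branch of the combinatorial object, and so that the fixed committee $W$ satisfies JR in the resulting profile iff the chosen assignment is ``good''. Concretely: with $k = n/2$, the JR threshold $n/k$ is $2$, so JR is violated precisely when some candidate outside $W$ is the common approved candidate of two voters none of whom is represented by $W$. I would introduce, for each clause (or each element to be covered), a pair of ``gadget voters'' whose three possible approval sets encode the literals appearing in that clause, together with a padding structure of voters whose approval sets are forced (singletons inside $W$, so they can never participate in a JR violation) to make $n$ even and to pin down $k = n/2$. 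The candidates outside $W$ play the role of ``witnesses'' to a JR failure; the construction ensures that a witness candidate is jointly approved by a bad pair of unrepresented voters iff the corresponding clause is left unsatisfied by the chosen assignment. Thus a \plausible profile in which $W$ satisfies JR exists iff the formula is satisfiable, i.e.\ $W$ is possibly-JR iff the SAT instance is a yes-instance.

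The main obstacle I anticipate is satisfying all three structural constraints simultaneously: (i) every voter puts positive probability on at most three approval sets, (ii) $k$ is forced to be exactly $n/2$, and (iii) the $n/k = 2$ threshold is low enough that spurious JR violations are easy to create accidentally. Constraint (iii) is the delicate one — with threshold $2$, any two unrepresented voters sharing a single approved candidate outside $W$ already break JR, so the gadget must guarantee that in every ``consistent'' choice of approval sets the only unrepresented voters are the intended gadget voters and they share the intended witness candidate only in the unsatisfied case. I would handle this by making most padding voters approve exactly one candidate that lies in $W$ (so they are always represented and inert), and by giving each witness candidate outside $W$ a carefully limited set of potential approvers coming only from one clause gadget, so that cross-clause interference is impossible. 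Once the gadget is set up cleanly, verifying the ``iff'' is a routine case analysis using the JR characterization, and one checks that $n$ can be padded to an even number with $k = n/2$ without disturbing the argument.

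Finally, I would remark that the same construction, being already stated for the Lottery model with at most three approval sets per voter and $k = n/2$, immediately yields the hardness claimed in Corollary~\ref{cor:JR-prob-lottery} for \jrProb in the Lottery model, since deciding whether the JR-probability is nonzero is exactly \isPosJR.
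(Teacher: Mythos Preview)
Your high-level plan is right --- reduce from 3-SAT, exploit $n/k=2$, and use the Theorem~2 characterization --- and the NP-membership argument is fine. But the gadget you sketch has a conceptual mismatch. You say a witness candidate outside $W$ should be jointly approved by an unrepresented pair ``iff the corresponding clause is left unsatisfied by the chosen assignment''. If each voter's three approval sets encode the three literals of its clause, then picking \emph{any} one of them amounts to declaring that literal true, so the clause is automatically satisfied; there is no way for a clause to be ``left unsatisfied'' by such a choice. The actual combinatorial obstacle is \emph{cross-clause consistency}: the per-voter choices may collectively set some variable $x$ true via one clause and $\bar{x}$ true via another. Your description never addresses how the gadget detects this, and your stated plan to keep each witness candidate's potential approvers ``coming only from one clause gadget, so that cross-clause interference is impossible'' would make exactly the needed cross-clause interaction impossible to encode.

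The paper's construction is both simpler and targets the right constraint. It uses \emph{one} voter per clause (not a pair), no padding voters at all, and $W$ consists entirely of $k$ dummy candidates approved by nobody --- so every voter is unrepresented in every \plausible profile, and $W$ satisfies JR iff no two voters share an approved candidate. Each of voter $i$'s three approval sets contains a private candidate $\ell_{3(i-1)+j}$ plus, for every occurrence of the \emph{negation} of that literal in another clause, a shared ``conflict'' candidate $c_{i,j}$. Two voters then share a candidate (violating JR, since the threshold is $2$) precisely when one picks a literal and the other picks its negation. Hence a \plausible profile makes $W$ satisfy JR iff the selected literals form a consistent partial assignment, i.e., iff the formula is satisfiable. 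No per-clause ``unsatisfied'' detector is needed, and the absence of padding makes the constraints $k=n/2$ and ``at most three approval sets per voter'' automatic.
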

\begin{proof}
	{\sc IsPossJR} is in NP, since given a committee $W$ and a \plausible approval profile, we can check in polynomial time whether $W$ satisfies JR (Theorem 2 in~\citep{ABC+16a}).

	To show NP-hardness, we reduce from {\sc 3-SAT}~\citep{GaJo79a}. Let $I$ be a $3CNF$ formula consisting of $n$ clauses and $m$ variables $x_1,\ldots, x_m$. {W.l.o.g. we can assume that $n$ is even.} We reduce to an instance 
	{$I'=(V,C,[\Delta_i],k=\frac{|V|}{2},W)$}
	of {\sc IsPossJR} as follows. We have one voter per clause in $V$, so $|V|=n$. Fix an ordering on the literals in each clause of the $3CNF$. Let $C_1=\{\ell_1, \ldots, \ell_{3n}\}$ where candidate $\ell_1$ corresponds to the first literal in the first clause, candidate $\ell_2$ corresponds to the second literal in the first clause, and so on. Each voter $i$ 
	assigns positive probability to exactly 3 approval sets,
	each corresponding to a literal in the voter's corresponding clause. The $j$'th approval set of voter $i$ includes $\ell_{3(i-1)+j}$. For any two $\ell_i$ and $\ell_j$, $i<j$, that are associated with two distinct clauses, if $\ell_i = \Bar{\ell_j}$ we create a candidate $c_{i,j}$ and add it to the two corresponding approval sets. 
	Let $C_2$ denote the set of all such candidates $c_{i,j}$. Additionally, we create $c_1,\ldots c_k$ dummy candidates and include them all in $W$.
%
{An example illustrating this reduction is provided after the proof.}
We claim that $I$ admits a satisfying assignment iff $W$ satisfies JR for a \plausible approval profile in $I'$. Note that as $W$ has no candidate in common with any of the voters' \plausible approval sets, it satisfies JR iff no $\frac{n}{k}=2$ voters approve of the same candidate. Note that the only candidates that can be approved by more than one voters are {$c_{i,j}$'s}.

	Assume that $I$ admits a satisfying assignment $T$. Therefore at least one literal in each clause is set to True. Consider the approval sets in $I'$ corresponding to True literals. For each voter, there exists at least one approval set corresponding to a True literal. If there are more than one, select one randomly. We claim that in the resulting approval profile, no two voters approve of the same candidate. Assume for contradiction that two voters do approve of the same candidate $c_{i,j}$. This implies that in $T$ a literal and its negation were both set to true, which contradicts that $T$ is a satisfying assignment.

	Now assume that $W$ satisfies JR for a \plausible approval profile $\profile$ in $I'$. For each voter $i$, set the literal corresponding to their approval set to True in $I$. We claim that the resulting truth assignment $T$ is feasible (does not set a literal and its negation to true) and satisfies $I$. Assume for contradiction that $T$ sets a literal and its negation to true. This implies that two voters approve of the same candidate, say $c_{i,j}$, contradicting that $W$ satisfies JR for $\profile$.
	One approval set is picked for each voter, so one literal is set to True in $T$ for each clause, hence $T$ satisfies $I$. The proof is now complete.
\end{proof}

\emph{Example:} To aid the reader in understanding the above proof, we provide the following example illustrating the reduction. Consider the following 3CNF formula: $(x_1\lor x_2 \lor x_3) \land (x_1 \lor x_2 \lor \Bar{x_3}) \land (\Bar{x_3} \lor \Bar{x_2} \lor x_4) \land (x_1 \lor x_2 \lor x_4)$. This instance has 4 clauses and 4 variables. To reduce to an instance of {\sc IsPossJR}, we create a voter per clause, hence $V = \{1,2,3,4\}$. We will have three types of candidates. Let $C_1 = \{\ell_1, \ldots, \ell_{12}\}$, let $C_2 = \{c_{2,8}, c_{3,6}, c_{3,7}, c_{5,8}, c_{8,11} \}$, and add a dummy candidates set $\{c_1, c_2\}$. We have $k=\frac{4}{2} = 2$ and $W=\{c_1,c_2\}$. Each voter assigns a positive probability to exactly 3 approval sets:
\begin{itemize}
 	\item Voter 1:  $\{\ell_1\}$, $\{\ell_2, c_{2,8}\}$, $\{\ell_3, c_{3,6}, c_{3,7}\}$
	\item Voter 2: $\{\ell_4\}$, $\{\ell_5, c_{5,8}\}$, $\{\ell_6, c_{3,6}\}$
	\item Voter 3: $\{\ell_7, c_{3,7}\}$, $\{\ell_8, c_{2,8}, c_{5,8}, c_{8,11}\}$, $\{\ell_9\}$
	\item Voter 4: $\{\ell_{10}\}$, $\{\ell_{11}, c_{8,11}\}$, $\{\ell_{12}\}$
\end{itemize}

\section{Necessary JR}
\label{sec:necessary-JR}
In this section, we examine problems related to the Necessary JR. 
Firstly, we note that {\sc IsNecJR}, the problem of deciding whether a given committee $W$ satisfies JR with probability one, is trivially in P for the joint probability model. The polynomial time procedure involves taking each \plausible approval profile in turn and checking whether $W$ satisfies JR for each \plausible profiles. The latter can be done in polynomial time (see Theorem 2 in~\citep{ABC+16a}). This problem is in P also for the 3VA and candidate-probability models (immediate corollary of Theorem 13 in~\citep{IBK22a}). This problem is in P for the Lottery model too.
\begin{theorem}\label{thm:IsNecJR-Lottery}
 {\sc IsNecJR}, the problem of deciding whether a given committee $W$ satisfies JR with probability 1, is solvable in polynomial time for the Lottery model.
\end{theorem}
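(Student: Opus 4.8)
The plan is to reduce the problem to a simple counting task over candidates, exploiting the independence of voters in the Lottery model. First I would record the standard characterization of a JR violation: a committee $W$ fails JR for a profile $\profile$ exactly when there is a candidate $c$ and a set $V' \subseteq V$ with $|V'| \geq \lceil n/k \rceil$ such that $c \in A_i$ and $A_i \cap W = \emptyset$ for every $i \in V'$ (the candidate $c$ is any element of the nonempty common intersection $\bigcap_{i\in V'}A_i$, and necessarily $c \notin W$). Moreover one may always assume $|V'| = \lceil n/k\rceil$, since removing voters from a blocking group keeps $c$ in the common intersection and keeps every approval set disjoint from $W$.

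Next I would translate this into a condition on the uncertain input. For a candidate $c \in C \setminus W$, call voter $i$ a \emph{$c$-witness} if $i$ has some approval set $S$ with $\Delta_i(S) > 0$, $c \in S$, and $S \cap W = \emptyset$. The claim is that $W$ is \emph{not} necessarily JR iff there exists $c \in C \setminus W$ with at least $\lceil n/k \rceil$ $c$-witnesses. For the ``if'' direction, fix such a $c$, choose $\lceil n/k\rceil$ of its witnesses to form $V'$, let each $i \in V'$ take a witnessing set as $A_i$, and let every other voter take an arbitrary positive-probability set; since the voters' distributions are independent, the resulting profile is \plausible, and it violates JR via $V'$ and $c$. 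For the ``only if'' direction, a \plausible profile violating JR yields, by the characterization above, a candidate $c$ and a group $V'$ with $|V'| \ge \lceil n/k\rceil$ all of whose members are $c$-witnesses (each such voter's approval set in that profile serves as the witness), so $c$ has at least $\lceil n/k\rceil$ $c$-witnesses.

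Finally, the algorithm: for each candidate $c \in C \setminus W$ and each voter $i$, scan the explicitly given support of $\Delta_i$ and test whether some set in it contains $c$ and is disjoint from $W$; count the $c$-witnesses, and if for some $c$ this count is at least $\lceil n/k \rceil$ then answer ``not necessarily JR'', otherwise answer ``necessarily JR''. This runs in time polynomial in $m$, $n$, and $\sum_{i\in V} s_i$, i.e., in the input size. I do not expect a genuine obstacle here; the only points requiring care are using the correct threshold $\lceil n/k\rceil$ and making the appeal to independence explicit, so that the witnessing sets chosen for the different voters of $V'$ are guaranteed to coexist in a single \plausible profile.
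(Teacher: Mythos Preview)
Your proposal is correct and follows essentially the same approach as the paper: for each candidate $c\notin W$, count the voters that can be unrepresented by $W$ while approving $c$, and compare to $n/k$. Your definition of a $c$-witness is in fact more careful than the paper's wording, since you require a \emph{single} plausible approval set to both contain $c$ and miss $W$; the paper's phrasing separates membership in $V^*$ (having \emph{some} set disjoint from $W$) from ``having $c$ in at least one plausible set,'' which taken literally would allow two different sets to witness the two conditions and would make the test unsound.
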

\begin{proof}

Consider a committee $W$.
Define the set $V^*\subseteq V$ to be the set of voters who are not represented in $W$ under at least one of their \plausible approval sets. That is, $V^*=\{i: S_r\cap W =\emptyset \text{ for some } r \in [s_i]\}$.
For each $c\in C\setminus W$, we check whether $\frac{n}{k}$ or more voters in $V^*$ have $c$ in at least one of their \plausible approval sets. 
If yes, then there exits a \plausible approval profile for which $W$ does not satisfy JR and hence $W$ does not satisfy JR with probability 1.
If this condition does not hold for any $c\in C\setminus W$, then $W$ clearly satisfies JR under all \plausible approval profiles.
\end{proof}

{\sc ExistsNecJR}, the problem of deciding whether there exists a committee $W$ that satisfies JR with probability 1, is NP-complete for our four models. We prove by reducing from the following problem, which 
we will show is NP-complete.
Note that in the definition of JR (see Definition \ref{def:JR}) the size of $W$ was set to be equal to $k$, the size of the committee to be selected. One can however modify this and let $W$ be of size at most $k$, allowing to choose a committee of size smaller than $k$. 

\begin{definition}\label{def:sizejr}
{\sc SizeJR} is the problem of deciding, given an instance $(V, C, \profile, k)$ of ABC voting and a positive integer $r< k$,  whether there exists a committee $W$, $|W|=r$, such that $W$ satisfies JR.
\end{definition}

\begin{lemma}
\label{lem:sizeJR-NPC}
{\sc SizeJR} is NP-complete.
\end{lemma}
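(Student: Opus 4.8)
The plan is to first observe membership in NP: given a candidate committee $W$ with $|W| = r$, we can verify in polynomial time that $W$ satisfies JR using the standard check (Theorem 2 in~\citep{ABC+16a}), so a guessed witness of size $r$ suffices. The real work is NP-hardness, and I would reduce from a standard covering-type problem — most naturally \textsc{Set Cover} (or, if a cleaner gadget emerges, \textsc{Vertex Cover} or \textsc{Dominating Set}). The intuition is that JR is itself a covering condition: a committee $W$ fails JR exactly when some ``cohesive'' group of $\ge n/k$ voters sharing a common approved candidate is left entirely unrepresented, so asking for a \emph{small} JR committee is asking to hit all such potentially-uncovered groups with few candidates.

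Concretely, I would take a \textsc{Set Cover} instance with universe $U$, family $\mathcal{S} = \{S_1,\dots,S_t\}$ of subsets of $U$, and target $\rho$, asking whether $\rho$ sets cover $U$. I would build an ABC instance in which there is one ``set-candidate'' $d_j$ for each $S_j$ and, for each element $u \in U$, a block of voters whose common approval intersection is exactly the set $\{d_j : u \in S_j\}$ — so that the only way to represent that block (i.e., to avoid an unrepresented cohesive group built from it) is to include some $d_j$ with $u \in S_j$. By choosing the committee size $k$ and the number of voters $n$ so that $n/k$ equals the block size, each element-block is cohesive and must be hit. Then I would set $r = \rho$ (plus a fixed number of padding candidates if the construction needs filler to make $|W|$ come out to a prescribed value), so that a JR committee of size $r$ exists iff $\rho$ sets cover $U$. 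One direction is immediate: a cover yields a JR committee by picking the corresponding $d_j$'s (padding as needed). For the converse, a JR committee of size $r$ must, by the JR condition applied to each element-block, contain at least one $d_j$ covering each $u$, and after discarding any padding we recover a set cover of size $\le \rho$.

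The main obstacle I anticipate is controlling \emph{unintended} cohesive groups and making the arithmetic $|V'| \ge n/k$ work out cleanly: I must ensure that (a) each element-block is large enough to be cohesive on its own, (b) no spurious union of voters across different blocks creates a cohesive group that a set cover would fail to represent, and (c) the ratio $n/k$ is an integer that exactly matches the block size — this typically forces adding dummy voters and/or dummy candidates with private approvals so that those voters are trivially represented and do not interfere. I would handle (b) by making the approval sets of voters in different element-blocks share no candidates except possibly the $d_j$'s, and by noting that any group spanning two blocks has empty common intersection (hence is not even a JR-relevant group), so only the single-block groups matter. A secondary check is that the constraint $r < k$ in the problem statement is met, which is easy to arrange by padding $k$ upward with extra dummy candidates that are cheap to include.

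Finally I would state the two implications as a short equivalence: $(U,\mathcal{S},\rho)$ is a yes-instance of \textsc{Set Cover} $\iff$ the constructed ABC instance admits a JR committee of size $r$. Since the reduction is clearly polynomial-time and \textsc{Set Cover} is NP-complete, this establishes NP-hardness, and combined with membership in NP we conclude \textsc{SizeJR} is NP-complete.
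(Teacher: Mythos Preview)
Your proposal is sound and would yield a valid reduction, but the paper takes a much simpler route. Rather than building voter blocks and tuning $n/k$ to the block size, the paper phrases \textsc{Set Cover} directly in voting language (\textsc{CandidateCover}: universe elements become voters, sets become candidates, each voter approves exactly the sets containing it) and then sets $k = |V|$ in the \textsc{SizeJR} instance. With $n/k = 1$, every single voter with a nonempty approval set is already a cohesive group that must be represented, so a size-$r$ committee satisfies JR iff it covers every voter---i.e., iff it is a set cover of size $r$. The reduction is essentially the identity map, with no padding, no dummy voters, and no arithmetic to balance. Your block construction works too, but one of your supporting claims is off: in point (b), groups spanning two element-blocks need \emph{not} have empty common intersection (if $u,v \in S_j$ then both blocks approve $d_j$). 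This does not actually break your argument---in the forward direction a set cover represents every voter, so all cohesive groups including cross-block ones are handled; in the backward direction only the single-block groups are needed to force coverage---but it illustrates why the paper's $k = |V|$ shortcut is cleaner.
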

\begin{proof}
{\sc SizeJR} is in NP. Give a committee $W$ one can easily check in polynomial time whether $W$ satisfies JR (Theorem 2 in \citep{ABC+16a}). To show NP-hardness, we reduce from the NP-complete problem of {\sc CandidateCover} which is to decide, given an instance $I=(V,C,\profile,r)$ where $V$ is a set of voters, $C$ a set of candidates, $\profile$ an approval profile where $A_i\neq \emptyset, \forall i\in V$, and $r\leq |C|$ a positive integer, whether there exists a committee $W$ of size $r$ such that each voter is represented in $W$ , i.e. $A_i \cap W \neq \emptyset$. (Note that {\sc CandidateCover} is {\sc SetCover}~\citep{GaJo79a} described in voting terminology.)

We reduce from an instance $I$ 
of {\sc CandidateCover} to an instance $I'=(V, C, \profile, k=|V|, r)$ of SizeJR.
To satisfy JR every subset $V^*$ of voters, $|V^*| \geq \frac{|V|}{k}=1$, has to be represented, implying that every voter has to be represented. Thus it is straightforward to see that every YES instance of $I$ is a YES instance of $I'$ and vice versa.
\end{proof}

\begin{lemma}\label{lem:fullset}
Let {$I=(V, C, [p_{i,c}],k)$} be an instance of ABC voting under Candidate-Probability model. If $0 < p_{i,c} <1$ for all voters $i$ and candidates $c$, then a committee $W$ 
satisfies JR with probability 1
iff $W=C$ and $k=|C|$.
\end{lemma}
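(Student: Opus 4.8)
The plan is to prove both directions of the biconditional, with the ``$\Leftarrow$'' direction being essentially trivial and the ``$\Rightarrow$'' direction carrying all the content. For the easy direction, if $W = C$ and $k = |C|$, then every voter is trivially represented in $W$ (unless the voter's realized approval set is empty, in which case that voter cannot be in a violating cohesive group since $\bigcap_{i \in V'} A_i = \emptyset$ fails once it includes an empty set — actually more carefully, any group $V'$ with nonempty common intersection contains some candidate $c \in \bigcap_{i\in V'}A_i \subseteq C = W$, so $A_i \cap W \neq \emptyset$ for every $i \in V'$). Hence $W$ satisfies JR in every plausible profile, i.e., with probability $1$.

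For the hard direction, I would argue the contrapositive: assume either $W \neq C$ or $k < |C|$ (note $k \le |C|$ always, and $|W| = k$), and exhibit a plausible approval profile in which $W$ fails JR; since every such profile has positive probability under the hypothesis $0 < p_{i,c} < 1$, this shows $W$ does not satisfy JR with probability $1$. The key observation is that if $|W| = k < |C|$ then there is some candidate $c \notin W$. Now I want to build a profile where a group $V'$ of size $\ge n/k$ all approve $c$ and approve \emph{nothing} in $W$. The natural choice: let every one of the $n$ voters approve exactly $\{c\}$ and disapprove everything else. This realized profile has probability $\prod_{i \in V} \big(p_{i,c} \prod_{c' \neq c}(1 - p_{i,c'})\big) > 0$ by the strict inequalities. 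In this profile, take $V' = V$; then $|V'| = n \ge n/k$, $\bigcap_{i \in V'} A_i = \{c\} \neq \emptyset$, but $A_i \cap W = \{c\} \cap W = \emptyset$ for all $i$, so JR is violated. The only remaining case is $k = |C|$ but $W \neq C$; but $|W| = k = |C|$ forces $W = C$, a contradiction, so this case is vacuous. Thus whenever $(W, k) \neq (C, |C|)$, $W$ fails JR with positive probability.

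The main ``obstacle'' is really just bookkeeping: making sure the edge cases are handled — in particular that $|W| = k$ is part of the standing definition of a winning committee, so $k = |C|$ genuinely forces $W = C$, and that the profile I construct is genuinely plausible, which is exactly where the hypothesis $0 < p_{i,c} < 1$ (as opposed to $p_{i,c} \in [0,1]$) is used: each voter independently realizes any particular approval set with strictly positive probability. I would also remark that this lemma immediately yields the ``in P'' entries claimed in Table~\ref{table:summary:uncertainABC} for \existsNecJR under the Candidate-Probability and 3VA models when all $p_{i,c} \in (0,1)$: one simply checks whether $C$ itself (which is forced to be the only candidate for a necessarily-JR committee, and requires $k = |C|$) satisfies JR with probability $1$ — equivalently, whether $k = |C|$ — which is a trivial test.
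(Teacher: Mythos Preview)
Your proof is correct and follows essentially the same approach as the paper: for any candidate $c$, the profile in which every voter's approval set is exactly $\{c\}$ is plausible (by the strict inequalities $0<p_{i,c}<1$) and forces $c\in W$, hence $W=C$ and $k=|C|$. The paper's version is terser—it omits the trivial $\Leftarrow$ direction and the case bookkeeping—but the core idea is identical.
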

\begin{proof}
Take any candidate $c\in C$. Let $\profile_c$ be the approval profile in which each voter's approval set is exactly $\{c\}$. $\profile_c$ is a \plausible approval profile. As all the voters only approve of $c$, to satisfy JR for $\profile_c$, $W$ must contain $c$. Our argument was independent of the choice of $c$, hence for $W$ to satisfy JR for all \plausible approval profiles it must be that $W=C$ and $k=|C|$.
\end{proof}
\begin{theorem}\label{thm:ExistsNecJRCandidate}
{\sc ExistsNecJR}, the problem of deciding whether there exists a committee $W$ that satisfies JR with probability 1, 
is NP-complete for the Candidate-Probability model.
\end{theorem}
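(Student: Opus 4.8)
The plan is to show membership in NP and then NP-hardness by reducing from \textsc{SizeJR}, which Lemma~\ref{lem:sizeJR-NPC} establishes is NP-complete. For membership, given a candidate committee $W$, I would use the characterization of when a committee fails JR with positive probability: since $p_{i,c}$'s are independent, $W$ fails to be necessarily JR iff there is a \plausible approval profile witnessing a JR violation, and such a witness is just a set $V'$ of at least $n/k$ voters together with a candidate $c\notin W$ approved by all of them — and the relevant plausible profile exists precisely when $p_{i,c}>0$ for all $i\in V'$. So a certificate for the complement is small; but I actually want \textsc{ExistsNecJR} itself in NP, so the certificate is the committee $W$ together with, implicitly, a polynomial-time check that no such violating $(V',c)$ exists (for each $c\notin W$, count the voters with $p_{i,c}>0$ and check the count is below $n/k$ — this is exactly the check used in Theorem~\ref{thm:IsNecJR-Lottery}, specialized to the Candidate-Probability model, which the excerpt already notes is in P). Hence \textsc{ExistsNecJR} is in NP.

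For hardness, the key idea is to encode a deterministic \textsc{SizeJR} instance $I=(V,C,\profile,k,r)$ inside a Candidate-Probability instance so that ``committee satisfies JR with probability 1'' forces the committee to ``behave like'' a size-$r$ JR committee of $I$. Lemma~\ref{lem:fullset} warns that if every $p_{i,c}\in(0,1)$ the only necessarily-JR committee is $W=C$ with $k=|C|$, so the reduction must deliberately use the boundary values $0$ and $1$. The natural construction: keep voters $V$ and create a candidate set consisting of the original candidates $C$ (these will carry probability $1$ exactly on the pairs where $c\in A_i$ in $\profile$, probability $0$ otherwise), plus $r$ extra ``dummy'' candidates $d_1,\dots,d_r$ that every voter approves with probability $0$ (so no voter is ever represented by a dummy), and set the target committee size to $k' = r + (\text{something})$. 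The dummies are there so that a necessarily-JR committee is forced to spend $k'-(\text{padding})$ slots on real candidates; with the deterministic profile $\profile$ now being the \emph{only} \plausible profile among the real candidates (probabilities are $0/1$ there), ``$W\cap C$ satisfies JR for $\profile$ with $|W\cap C|\le r$'' is exactly the \textsc{SizeJR} question. One must tune the sizes so that the $n/k'$ threshold for JR in the new instance matches the JR/CandidateCover requirement in \textsc{SizeJR} (recall in the proof of Lemma~\ref{lem:sizeJR-NPC}, \textsc{SizeJR} with $k=|V|$ makes the threshold $1$, i.e. every voter must be represented — this is the cleanest target to aim for).

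Concretely I would aim for: set $k' = |V|$ so the JR threshold $n/k' = 1$, i.e. a committee is necessarily JR iff every voter is represented under every \plausible profile. Take the real candidates with $p_{i,c}=1$ if $c\in A_i$ and $p_{i,c}=0$ otherwise — so each voter $i$ deterministically approves exactly $A_i$ among real candidates. Add dummy candidates approved by nobody (all $p=0$). Now a committee $W$ of size $|V|$ is necessarily JR iff every voter has some real candidate of $A_i$ in $W$, i.e. iff $W\cap C$ is a set cover of $(V,C,\profile)$; and we can always pad $W$ up to size $|V|$ with dummies. So \textsc{ExistsNecJR} on this instance is a yes-instance iff $(V,C,\profile)$ has a set cover of size $\le |V|\!-\!(\text{number of dummies})$; choosing the number of dummies to be $|V|-r$ makes this exactly ``set cover of size $\le r$'', i.e. the \textsc{CandidateCover}/\textsc{SizeJR} instance. (A minor check: we need enough real candidates, $|C|\ge r$, and $|V|-r\ge 0$, both free to assume.)

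The main obstacle I anticipate is getting the arithmetic of the JR threshold exactly right: JR in the constructed instance compares group sizes against $n/k'$, and I must ensure (i) the only ``dangerous'' cohesive groups are singletons (forcing per-voter representation and matching set cover), and (ii) the dummy candidates genuinely cannot help any voter or create spurious JR obligations — which is why they get probability $0$ for everyone, so they appear in no voter's approval set in any \plausible profile. A secondary subtlety is confirming that no \plausible profile other than the ``$0/1$ deterministic'' one matters: since every $p_{i,c}\in\{0,1\}$, there is in fact a unique \plausible profile, so ``with probability $1$'' collapses to the deterministic JR condition, and correctness follows directly from Lemma~\ref{lem:sizeJR-NPC}.
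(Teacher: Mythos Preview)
Your NP-membership argument is fine. The reduction, however, is broken---and in fact any reduction that uses only $\{0,1\}$-valued probabilities is doomed: with a single \plausible profile, \textsc{ExistsNecJR} collapses to ``does a JR committee exist for this deterministic instance?'', and the answer to that is always yes. Concretely, in your construction nothing \emph{forces} the committee to include the dummies; since every $A_i\neq\emptyset$, one can always cover all voters with at most $|V|$ real candidates (pick one from each $A_i$, then throw in redundant real candidates until the size is at least $r$) and then pad with dummies to reach size $k'=|V|$. So your \textsc{ExistsNecJR} instance is a YES-instance regardless of whether a size-$r$ cover exists, and the reduction does not preserve NO-instances.

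The paper's reduction avoids this by introducing genuine uncertainty. It adds $n$ new voters $V^+$ and $2k-r$ new candidates $C^+$; the new voters disapprove all of $C$ and approve each candidate in $C^+$ with probability strictly in $(0,1)$. By (the argument of) Lemma~\ref{lem:fullset}, any necessarily-JR committee must then contain all of $C^+$, which consumes $2k-r$ of the $k'=2k$ slots and leaves exactly $r$ slots for candidates in $C$, with the JR threshold $|V'|/k'=n/k$ preserved. You read Lemma~\ref{lem:fullset} as a warning to avoid interior probabilities; it is actually the forcing mechanism your dummies lack.
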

\begin{proof}
{\sc ExistsNecJR} is in NP because {\sc IsNecJR} is in P (immediate corollary of Theorem 13 in~\citep{IBK22a}).
To prove NP-hardness we reduce from {\sc SizeJR} that we have shown is NP-complete (see \Cref{lem:sizeJR-NPC}).

Given an instance $I=(V, C, \profile, k, r)$ of {\sc SizeJR},  we reduce to an instance {$I'=(V', C', [p_{i,c}], k')$} of {\sc ExistsNecJR} under the Candidate-Probability model as follows. 
Suppose $|V|=n$ and $|C|=m$.
 Let $V' = V \cup V^+$ where in $V^+$ we have $n$ new voters. Let $C'=C\cup C^+$ where in $C^+$ we have $2k-r$ new candidates 
 {$\{1+m,\ldots, 2k-r+m\}$}.
 Each voter in $V$ has the same approval set as in $\profile$, implying that for each voter $i\in V$, $p_{i,c} = 1$, $\forall c\in A_i$ and $p_{i,c'}=0$, $\forall c'\in C'\setminus A_i$. Each voter $i\in V^+$ disapproves of all candidates in $C$ (i.e. $p_{i,c}=0, \forall c\in C$) and finds all new candidates possibly acceptable (i.e. $0< p_{i,c}<1, \forall c\in [1+m,2k-r+m]$). Let $k'=2k$. As $|V'| = 2|V|$ we have that $\frac{|V'|}{k'} = \frac{|V|}{k}=\frac{n}{k}$. This reduction can be done in polynomial time. It remains to show that every YES instance of $I$ is a YES instance of $I'$ and vice versa.

Suppose $I$ is a YES instance of {\sc SizeJR}. Then there exists a committee $W$, $|W|=r$, that satisfies JR in $I$.
Let $W'=W \cup C^+$. We claim that $W'$ satisfies JR under all \plausible approval profiles of $I'$. Note that $|W'| = r + 2k-r = 2k$ and hence of the correct size. 
By construction we have that $C^+ \subset W'$, therefore all voters in $V^+$ are represented in $W'$. 
Voters in $V$ 
do not approve of any candidate in $C^+$. 
$W$ satisfies JR in $I$, implying that there does not exist a candidate $c\in C\setminus W$ who is approved by at least $\frac{n}{k}$ voters in $V$ none of whom is represented in $W$. Therefore $W'$ satisfies JR under all 
\plausible approval profiles
and hence $I'$ is a YES instance.

Suppose that $I'$ is a YES instance of {\sc ExistsNecJR}. Then there exists a committee $W'$, $|W'|=2k$, that satisfies JR under all \plausible approval profiles of $I'$. Voters in $V^+$ do not approve of any candidates in $C$, hence it follows from Lemma \ref{lem:fullset} that $C^+ \subset W'$. Let $W = W' \setminus C^+$ and so we have that $|W| = 2k-(2k-r) = r$. Voters in $V$ do not approve of any candidates in $C^+$, so they can only be represented by candidates in $C$. Therefore, for $W'$ to satisfy JR under all \plausible approval profiles it must be that there does not exist a candidate $c\in C\setminus W$ who is approved by at least $\frac{n}{k}$ voters in $V$ none of whom is represented in $W$. Hence $W$ satisfies JR in $I$ and $I$ is a YES instance.
\end{proof}
\begin{corollary}\label{thm:ExistsNecJR3VA}
{\sc ExistsNecJR} is NP-complete for the 3VA models.
\end{corollary}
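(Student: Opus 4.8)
The plan is to reduce from {\sc ExistsNecJR} for the Candidate-Probability model (shown NP-complete in Theorem~\ref{thm:ExistsNecJRCandidate}) to {\sc ExistsNecJR} for the 3VA model, and to recall that {\sc ExistsNecJR} under 3VA lies in NP because {\sc IsNecJR} under 3VA is in P (a corollary of Theorem~13 in~\citep{IBK22a}). Membership in NP is then immediate: a committee $W$ is a certificate, and verifying that $W$ satisfies JR with probability~$1$ is polynomial-time. So the real work is the hardness direction, and here the key observation is that the reduction used to prove Theorem~\ref{thm:ExistsNecJRCandidate} already \emph{only uses 3VA probabilities}. Indeed, in that construction, for voters $i \in V$ we set $p_{i,c} \in \{0,1\}$, and for voters $i \in V^+$ we set $p_{i,c} = 0$ for $c \in C$ and $0 < p_{i,c} < 1$ for $c \in C^+$; the only place the argument invokes strict interior probabilities is via Lemma~\ref{lem:fullset}, which needs the new voters to put \emph{some} positive probability on each new candidate and \emph{some} positive probability on disapproving it. Setting all those interior values to exactly $0.5$ keeps them strictly between $0$ and $1$, so Lemma~\ref{lem:fullset} still applies verbatim, and every plausible approval profile that mattered in the Candidate-Probability analysis remains plausible under 3VA.

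Concretely, I would take an instance $I = (V, C, \profile, k, r)$ of {\sc SizeJR} (NP-complete by Lemma~\ref{lem:sizeJR-NPC}) and build the same instance $I' = (V', C', [p_{i,c}], k')$ as in the proof of Theorem~\ref{thm:ExistsNecJRCandidate}, except that the values $0 < p_{i,c} < 1$ chosen for $i \in V^+$, $c \in C^+$ are now all fixed to $0.5$; this is a legal 3VA instance since every $p_{i,c} \in \{0, 0.5, 1\}$. The correctness argument is then copied word for word: if $I$ is a YES instance witnessed by $W$ with $|W| = r$, then $W' = W \cup C^+$ has size $2k = k'$ and satisfies JR under every plausible profile of $I'$ (voters in $V^+$ are covered by $C^+ \subseteq W'$; voters in $V$ approve no candidate of $C^+$ and $W$ already blocks any $c \in C \setminus W$ with $\ge n/k$ unrepresented supporters); conversely, if $W'$ witnesses a YES for $I'$, Lemma~\ref{lem:fullset} (applicable since the $V^+$-voters have $p_{i,c} \in \{0, 0.5\}$ only, with interior values strictly between $0$ and $1$ on $C^+$) forces $C^+ \subseteq W'$, so $W = W' \setminus C^+$ has size $r$ and must block every dangerous $c \in C \setminus W$, i.e.\ satisfies JR in $I$. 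The reduction is clearly polynomial-time.

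I do not anticipate a genuine obstacle; the statement is a corollary precisely because the Candidate-Probability reduction was already designed with 3VA-compatible probabilities. The one point to be careful about is confirming that nothing in the chain of arguments secretly exploited a value of $p_{i,c}$ other than $0$, $1$, or "something in $(0,1)$" — in particular that the proof of Lemma~\ref{lem:fullset} only constructs the profiles $\profile_c$ in which each voter approves exactly $\{c\}$, which is plausible under 3VA whenever $p_{i,c} > 0$, and that the argument never needs a profile where a $V^+$-voter approves \emph{all} of $C^+$ simultaneously beyond what $p_{i,c} = 0.5$ already permits. Once that is checked, the corollary follows.
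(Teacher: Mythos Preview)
Your proposal is correct and matches the paper's intended argument: the corollary is stated without proof precisely because the reduction in Theorem~\ref{thm:ExistsNecJRCandidate} already uses only probabilities in $\{0,1\}$ for the original voters and arbitrary interior values for the new voters, so fixing those interior values to $0.5$ yields a valid 3VA instance with the same correctness proof (including the appeal to Lemma~\ref{lem:fullset}). The only minor wording issue is your opening sentence, which speaks of reducing ``from'' Candidate-Probability {\sc ExistsNecJR}; what you actually (and correctly) do in the body is reduce directly from {\sc SizeJR}, reusing the construction of Theorem~\ref{thm:ExistsNecJRCandidate} specialised to 3VA.
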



We can show that {\sc ExistsNecJR} is NP-complete for the Lottery model and Joint Probability model, using an almost identical argument as to the one presented in the proof of Theorem~\ref{thm:ExistsNecJRCandidate}. The main variation is the instance reduction due to a different problem setting. 

\begin{theorem}\label{thm:ExistsNecJRLotteryJoint}
{\sc ExistsNecJR}, the problem of deciding whether there exists a committee $W$ that satisfies JR with probability 1, is NP-complete for the Lottery model and Joint Probability model.
\end{theorem}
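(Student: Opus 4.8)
The plan is to adapt the proof of \Cref{thm:ExistsNecJRCandidate} almost verbatim, replacing the ``$0<p_{i,c}<1$'' gadget voters by gadget voters whose lottery (respectively, joint distribution) is supported on singleton ballots. Membership in NP is immediate in both models: given a candidate committee $W'$ one can decide {\sc IsNecJR} in polynomial time --- for the Lottery model by \Cref{thm:IsNecJR-Lottery}, and for the Joint Probability model by the remark opening \Cref{sec:necessary-JR} --- so it suffices to guess $W'$ and verify. Hardness is again by reduction from {\sc SizeJR}, which is NP-complete by \Cref{lem:sizeJR-NPC}.

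Given a {\sc SizeJR} instance $I=(V,C,\profile,k,r)$ with $|V|=n$, I would construct $V'=V\cup V^+$ with $|V^+|=n$ fresh voters, $C'=C\cup C^+$ with $|C^+|=2k-r$ fresh candidates, and committee size $k'=2k$, so that $\tfrac{|V'|}{k'}=\tfrac{n}{k}$, exactly as in the original reduction. Each voter $i\in V$ retains the deterministic ballot $A_i$ (a point-mass lottery, or a coordinate constant across all listed profiles). For the Lottery model, each voter $i\in V^+$ gets the lottery assigning probability $\tfrac{1}{2k-r}$ to each singleton $\{c\}$ with $c\in C^+$, the lotteries being independent. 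For the Joint Probability model, I would instead list exactly the $2k-r$ profiles $\profile_c$ ($c\in C^+$) in which every voter of $V$ has ballot $A_i$ and every voter of $V^+$ has ballot $\{c\}$, each with probability $\tfrac{1}{2k-r}$; crucially this list is of polynomial length. In both models the two features driving correctness are the same: (i) for every $c\in C^+$ the profile in which all of $V^+$ approves exactly $\{c\}$ is \plausible, and (ii) voters of $V^+$ never approve a candidate of $C$ and voters of $V$ never approve a candidate of $C^+$.

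Correctness then reuses the two directions of \Cref{thm:ExistsNecJRCandidate}. Forward: if $W$, $|W|=r$, is JR in $I$, then $W'=W\cup C^+$ (size $2k=k'$) is necessarily JR in $I'$ --- in any \plausible profile every voter of $V^+$ is represented since $C^+\subseteq W'$, hence a group witnessing a JR violation must lie inside $V$, but then feature (ii), the identity $\tfrac{|V'|}{k'}=\tfrac{n}{k}$, and JR of $W$ in $I$ together with $W\subseteq W'$ produce a represented voter, a contradiction. Backward: if $W'$, $|W'|=2k$, is necessarily JR in $I'$, then feature (i) forces $c\in W'$ for each $c\in C^+$ --- the group $V^+$ has size $n\ge \tfrac{n}{k}=\tfrac{|V'|}{k'}$ and unanimously approves $c$ in $\profile_c$, whose only approved candidate is $c$ --- so $C^+\subseteq W'$; this is precisely the analogue of \Cref{lem:fullset} for these two models. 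Hence $W:=W'\setminus C^+\subseteq C$ has size $r$, and since voters of $V$ can be represented only by candidates of $C$, $W'$ being JR in every $\profile_c$ forces $W$ to be JR in $I$, so $I$ is a YES instance.

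The only spot requiring genuine (if mild) care is the Joint Probability case: one cannot simply invoke the Lottery-to-Joint-Probability translation, since it blows up exponentially, so the reduction must directly output a polynomially long list of \plausible profiles, which the $2k-r$ profiles $\profile_c$ accomplish. Beyond that, the argument is a routine transcription of \Cref{thm:ExistsNecJRCandidate}, and the single conceptual ingredient --- that the gadget voters force all of $C^+$ into any necessarily-JR committee --- is exactly the singleton-profile trick already used to prove \Cref{lem:fullset}.
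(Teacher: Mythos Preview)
Your proposal is correct and follows essentially the same approach as the paper: the paper gives exactly the same two reductions (Lottery voters in $V^+$ with uniform lotteries over the singletons $\{c\}$, $c\in C^+$; Joint Probability model listing the $2k-r$ profiles where all of $V^+$ approves the same singleton), and defers to the argument of \Cref{thm:ExistsNecJRCandidate} for correctness. Your explicit remark that the Joint Probability reduction must output a polynomial-size list rather than expand the Lottery instance is a nice point the paper leaves implicit.
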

\begin{proof}
We present the instance reductions in the settings of Lottery and Joint Probability models below, building upon the proof argument mentioned earlier in Theorem~\ref{thm:ExistsNecJRCandidate}.

\noindent\emph{\textbf{ Lottery model:}} Given an instance $I=(V, C, \profile, k, r)$ of {\sc SizeJR}, we reduce to an instance {$I'=(V', C', [\Delta_i], k')$} of {\sc ExistsNecJR} under the Lottery model as follows. Suppose $|V|=n$. Let $V' = V \cup V^+$ where in $V^+$ we have $n$ new voters. Let $C'=C \cup C^+$ where in $C^+$ we have $2k-r$ new candidates  {$\{1+m,\ldots, 2k-r+m\}$}. Each voter in $V$ approves of $A_i$ with probability 1, i.e. {$\Delta_i(A_i) = 1$}. For each voter $i\in V^+$ we have {$\Delta_i(\{j+m\})=\frac{1}{2k-r}$}, for all $j\in [1,2k-r]$. Let $k'=2k$. 

\noindent\emph{\textbf{Joint Probability model:}} Given an instance $I$
of {\sc SizeJR},  we reduce to an instance {$I'=(V', C', \Delta, k')$} of {\sc ExistsNecJR} under the Joint Probability model as follows. 
$V'$ and $C'$ are defined as in above.
Let {$\Delta(A_1, \ldots, A_n, \{j+m\}, \dots ,\{j+m\}) = \frac{1}{2k-r}$}, for all $j\in [1,2k-r]$. That is, there are exactly $2k-r$ \plausible approval profiles, each voter $i\in V$ approves of $A_i$ under all \plausible approval profiles, and the new voters all approve of exactly one new candidate, and only that candidate, in exactly one \plausible approval profile. Let $k'=2k$. 

The above reductions can be done in polynomial time. It is easy to see, following similar arguments presented in the proof of Theorem~\ref{thm:ExistsNecJRCandidate}, that every YES instance of $I$ is a YES instance of $I'$ and vice versa.
\end{proof}

We observe that, under the Joint Probability model, {\sc ExistsNecJR} is NP-complete even when minimal uncertainty is present.  

\begin{theorem}\label{thm:ExistsNecJRjoint2}
{\sc ExistsNecJR} is NP-complete for the Joint Probability model even if there are only two approval profiles $\profile_1$ and $\profile_2$ associated with a positive probability; i.e., $s=2$ and $\lambda_1=1-\lambda_2$.
\end{theorem}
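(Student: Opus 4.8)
The plan is to reuse the structure of the Joint‑Probability reduction in the proof of \Cref{thm:ExistsNecJRLotteryJoint}, but to \emph{compress} its $2k-r$ \plausible profiles into just two. Membership in NP is immediate in the same way as before: since {\sc IsNecJR} is in P for the Joint Probability model, one can guess a committee $W$ of the prescribed size and verify in polynomial time that it satisfies JR under both $\profile_1$ and $\profile_2$ (Theorem 2 in \citep{ABC+16a}).

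For hardness I would reduce from {\sc CandidateCover} (equivalently, {\sc SizeJR} with $k=|V|$), which is NP‑complete by \Cref{lem:sizeJR-NPC} (it is {\sc SetCover}~\citep{GaJo79a} phrased in voting terms). Given an instance $(V,C,\profile,r)$ with $n\coloneqq|V|$, $m\coloneqq|C|$ and, without loss of generality, $1\le r<n$ and $r\le m$ (otherwise the instance is trivial), construct $I'=(V',C',\Delta,k')$ as follows. Let $V'=V\cup\{u_1,\dots,u_n\}$, let $C'=C\cup C^+$ with $C^+=\{d_1,\dots,d_{2n-r}\}$, set $k'=2n$, and let $\Delta=\{(\tfrac12,\profile_1),(\tfrac12,\profile_2)\}$. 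In both profiles each $i\in V$ approves exactly its original set $A_i$. In $\profile_1$, voter $u_i$ approves exactly $\{d_i\}$ for every $i\in[n]$. In $\profile_2$, voter $u_i$ approves exactly $\{d_{n+i}\}$ for every $i\in[n-r]$, and $u_i$ approves $\{d_1\}$ for $i\in\{n-r+1,\dots,n\}$. The key numerical fact is that $\frac{|V'|}{k'}=\frac{2n}{2n}=1$, so under each profile \emph{every} voter (each has a nonempty approval set) must be individually represented by any JR committee.

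The correctness proof is then pure budget accounting. The singleton groups $\{u_i\}$ force $\{d_1,\dots,d_n\}\subseteq W$ via $\profile_1$ and $\{d_{n+1},\dots,d_{2n-r}\}\subseteq W$ via $\profile_2$, so any size‑$2n$ committee $W$ that is JR under both profiles must contain all of $C^+$; since $|C^+|=2n-r$, this leaves exactly $r$ seats, all filled from $C$. Writing $W_0\coloneqq W\cap C$ (so $|W_0|=r$), JR under $\profile_1$ restricted to the voters in $V$ — who approve only candidates in $C$ — says precisely that $A_i\cap W_0\ne\emptyset$ for every $i\in V$, i.e.\ that $W_0$ is a size‑$r$ cover. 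Conversely, if $W_0\subseteq C$ is a size‑$r$ cover, then $W\coloneqq W_0\cup C^+$ has size $2n$ and leaves no voter unrepresented under either profile, hence is necessarily JR. The reduction is clearly polynomial, establishing NP‑hardness with $s=2$ and $\lambda_1=1-\lambda_2$.

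The only delicate point — and the step I would be most careful about — is the parameter bookkeeping that makes the forcing gadget fit into two profiles. Each profile supplies $n$ fresh ``forcer'' voters, and because the representation threshold is exactly $1$, a single profile can pin down as many as $n$ distinct fresh candidates; since $2n-r<2n$, two profiles suffice to pin down all of $C^+$, while the committee budget $2n$ is tight enough that the leftover $r$ seats are forced to form a genuine cover. (Reducing from {\sc SizeJR} with an arbitrary $k$ would additionally require $k\mid|V|$, so that the $|V|$ forcers split into $k$ threshold‑sized blocks; taking $k=|V|$, i.e.\ reducing from {\sc CandidateCover}, removes that nuisance.)
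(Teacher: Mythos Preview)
Your proof is correct and follows essentially the same approach as the paper's: add $n$ fresh voters and $2k-r$ fresh candidates, use the two profiles to force all fresh candidates into any necessarily-JR committee, and let the remaining $r$ seats encode the cover/JR condition on the original instance. The only difference is that you specialise to {\sc CandidateCover} (i.e.\ $k=n$, threshold $1$) rather than general {\sc SizeJR}; this is a mild simplification that sidesteps the divisibility constraints (e.g.\ $k\mid n$ and $2\mid (2k-r)$) that the paper's block construction implicitly relies on, but it is otherwise the same argument.
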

\begin{proof}
Given an instance $I=(V, C, \profile, k, r)$ 
of {\sc SizeJR},  we reduce to an instance $I'=(V', C', \Delta, k')$ of {\sc ExistsNecJR} under the Joint Probability model as follows. Suppose $|V|=n$. Let $V' = V \cup V^+$ where in $V^+$ we have $n$ new voters and $C'=C\cup C^+$ where in $C^+$ we have $2k-r$ new candidates.  
Let $\Delta(\profileset) = \{(\lambda_1, \profile_1\}, (\lambda_2, \profile_2\})$, such that $0<\lambda_1<1$ and $\lambda_2=1-\lambda_1$.That is, there are exactly 2 \plausible  
approval profiles. Let
$\profile_1 = (A_1, \ldots, A_n, \frac{n}{k}\times\{m+1\}, \frac{n}{k}\times\{m+2\} \dots,\frac{n}{k}\times \{m+\frac{2k-r}{2}\}, \{\}, \ldots \{\})$ and $\profile_2 = (A_1, \ldots, A_n, \frac{n}{k}\times\{m+\frac{2k-r}{2}+1\}, \dots, \frac{n}{k}\times\{m+(2k-r)\}, \{\}, \ldots, \{\})$. The notation $\frac{n}{k}\times \{c\}$ indicates that exactly $\frac{n}{k}$ voters approved a candidate set $\{c\}$.  Note that empty approval sets represent the dummy voters who do not approve of any candidates and are added to make the total number of voters $2n$. Each voter $i\in V$ approves of $A_i$ under both approval profiles. All the new voters approve of exactly one new candidate, and every new candidate is approved by $\frac{n}{k}$ new voters.  Let $k'=2k$. As $|V'| = 2n$ we have that $\frac{|V'|}{k'} =\frac{n}{k}$. This reduction can be done in polynomial time. We claim that every YES instance of $I$ is a YES instance of $I'$ and vice versa.

Suppose $I$ is a YES instance of {\sc SizeJR}. Then there exists a committee $W$, $|W|=r$, that satisfies JR in $I$. Let $W'=W \cup  C^+$. We claim that $W'$ satisfies JR under all \plausible approval profiles of $I'$. Note that $|W'| = r + 2k-r = 2k$ and hence of the correct size. As $C^+ \subset W'$, it follows from the construction of the approval profiles that all voters in $V^+$ are represented in $W'$. $W$ satisfies JR in $I$, implying that there does not exist a candidate $c\notin W$ who is approved by at least $\frac{n}{k}$ voters in $V$ none of whom is represented in $W$. Therefore $W'$ is necessarily JR in $I'$ and $I'$ is a YES instance.

Suppose that $I'$ is YES instance of {\sc ExistsNecJR}. Then there exists a committee $W'$, $|W'|=2k$, that satisfies JR under all \plausible approval profiles of $I'$. It follows from the construction of the \plausible approval profiles that for any given $c\in C^+$, it is possible that $\frac{n}{k}$ voters in $V^+$ have $\{c\}$ as their approval set. Therefore, for $W'$ to satisfy JR under all \plausible approval profiles, it must be that $C^+ \subset W'$. Let $W = W' \setminus C^+$ and so we have that $|W| = 2k-(2k-r) = r$. Voters in $V$ do not approve of any candidates in $C^+$, so they can only be represented by candidates in $C$. Therefore, for $W'$ to satisfy JR under all \plausible approval profile, it must be that there does not exist a candidate $c\notin W$ who is approved by at least $\frac{n}{k}$ voters in $V$ none of whom is represented in $W$. Hence $W$ satisfies JR in $I$ and $I$ is a YES instance.
\end{proof}

{\sc ExistNecJR} is in P for the Lottery model when each voter approves of only one candidate in each \plausible approval profile. 
\begin{theorem}\label{thm:IsNecJR-Lottery-P}
 {\sc ExistNecJR} is in P for the Lottery model when each voter specifies a probability distribution over approval sets, with each set containing a single candidate.
\end{theorem}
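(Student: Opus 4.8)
The plan is to show that, in this restricted setting, \existsNecJR is equivalent to a simple cardinality test computable in polynomial time. First I would unpack the structure of \plausible profiles: since every approval set here is a singleton, each voter $i$ is characterised by its \emph{support} $P_i \coloneqq \{c \in C : \Delta_i(\{c\}) > 0\}$, and a \plausible approval profile is exactly a choice, independently for each voter $i$, of some candidate $a_i \in P_i$ with $A_i = \{a_i\}$.

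Next I would characterise failure of JR. Fix a committee $W$ with $|W| = k$ and a \plausible profile $\profile = (\{a_1\},\dots,\{a_n\})$. Because all approval sets are singletons, a group $V' \subseteq V$ satisfies $\bigcap_{i \in V'} A_i \neq \emptyset$ iff all voters in $V'$ approve one common candidate $c$, and then no voter of $V'$ is represented in $W$ iff $c \notin W$. Hence $W$ violates JR for $\profile$ iff some $c \in C \setminus W$ is approved by at least $n/k$ voters under $\profile$. Taking the worst case over \plausible profiles, the largest number of voters that can simultaneously approve a fixed $c$ is $n_c \coloneqq |\{i \in V : c \in P_i\}|$, and this value is realised by the \plausible profile in which every voter with $c \in P_i$ picks $c$ --- here it is crucial that the voters' choices are unconstrained and independent across voters, so the per-candidate worst cases do not interfere with one another. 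Consequently $W$ is JR with probability $1$ iff $n_c < n/k$ for every $c \in C \setminus W$; equivalently, $W$ is necessarily JR iff $B \subseteq W$, where $B \coloneqq \{c \in C : n_c \ge n/k\}$ is the set of ``dangerous'' candidates.

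The algorithm is then immediate: compute $n_c$ for every candidate by one pass over the supports $P_1,\dots,P_n$, form $B$, and answer ``yes'' iff $|B| \le k$ (using the standing assumption $m \ge k$, so that any subset of $C$ containing $B$ can be extended to size exactly $k$ from $C \setminus B$). Every step runs in time polynomial in $n$ and $m$. The only care needed is cosmetic bookkeeping: the threshold $n/k$ need not be integral, so ``at least $n/k$ voters'' should be read as ``at least $\lceil n/k \rceil$'' and the strict inequality $n_c < n/k$ adjusted accordingly; degenerate cases (e.g.\ $m < k$, or $n < k$) are subsumed by the same condition. I do not expect a real obstacle here --- the entire content of the proof is the observation that with singleton ballots the adversary's best profile against a given candidate is always attainable and the choices decouple across candidates, which collapses \existsNecJR to the test $|B| \le k$.
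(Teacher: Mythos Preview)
Your proposal is correct and follows essentially the same approach as the paper: both arguments count, for each candidate $c$, the number $n_c$ of voters whose support contains $c$, identify the ``dangerous'' set $B=\{c:n_c\ge n/k\}$, and output \textsc{yes} iff $|B|\le k$. Your write-up is more explicit about why the singleton-ballot structure makes this characterisation tight, but the underlying idea is identical.
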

\begin{proof}
For each candidate, we count the number of voters who potentially approve it. For $W$ to satisfy JR under all \plausible approval profiles, any candidate $c$ with a count $\ge \frac{n}{k}$ must be in $W$. This is because if $c$ is not in $W$ then $W$ does not satisfy JR under the \plausible approval profile in which all voters who potentially approve of $c$ have $\{c\}$ as their approval set. Therefore, if there are more than $k$ candidates with a count $\ge \frac{n}{k}$, then there is no $W$ that satisfies JR with probability 1. Otherwise it is easy to see that any winning committee $W$ that contains all candidates with a count $\ge \frac{n}{k}$ satisfies JR with probability 1. 
\end{proof}

\section{ {\sc JR-Probability}}
\label{sec:JR-probability}

{\sc JR-Probability} is the problem of determining the probability that a given $W$ satisfies JR, which is defined as follows:
$p (W~ \text{satisfies}~ JR) = \sum_{\profile\in \profileset} \Delta(\profile) \times \mathcal{I} ( \text{\sc IsJR}(W, \profile) ), 
$
where $\mathcal{I}$ is an indicator function that returns 1 when {\sc IsJR}$(W, \profile)$ is a YES instance, i.e. $W$ satisfies JR for $\profile$, and 0 otherwise.

Given a set $W$ and an approval profile $\profile$, we can check in polynomial time whether $W$ satisfies JR or not (see Theorem 2 in~\citep{ABC+16a}). The Joint Probability model takes approval profiles and their corresponding probability distributions as input. Therefore,
\begin{theorem}\label{thm:JR-prob-CP}
{\sc JR-Probability } is in P for the Joint probability model.
\end{theorem}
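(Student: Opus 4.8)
The plan is to exploit the fact that in the Joint Probability model the input is an \emph{explicit} list $\{(\lambda_r, \profile_r)\}_{r \in [s]}$ of plausible approval profiles together with their probabilities, so the defining sum
\[
p(W \text{ satisfies } JR) \;=\; \sum_{r \in [s]} \lambda_r \cdot \mathcal{I}\big(\textsc{IsJR}(W, \profile_r)\big)
\]
has only $s$ terms, and $s$ is bounded by the input size. Hence it suffices to evaluate each indicator in polynomial time and add up the corresponding weights.

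First I would recall the polynomial-time test for deterministic JR (Theorem~2 in~\citep{ABC+16a}): given $W$ and a profile $\profile = (A_1,\dots,A_n)$, let $U = \{i \in V : A_i \cap W = \emptyset\}$ be the set of unrepresented voters; then $W$ fails JR for $\profile$ exactly when there is a candidate $c \in C$ with $|\{i \in U : c \in A_i\}| \ge n/k$. This check runs in time $O(nm)$ for a single profile. Second, I would iterate over $r = 1, \dots, s$, run this test on each $\profile_r$, and accumulate $\sum_{r : W \text{ satisfies JR for } \profile_r} \lambda_r$. The total running time is $O(s \cdot nm)$ plus the cost of $s$ rational additions, which is polynomial in the size of the input (the profiles and the numbers $\lambda_r$ are all given explicitly).

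There is essentially no obstacle here: correctness is immediate from the definition of \jrProb and of the Joint Probability model (the events "$\profile_r$ occurs" partition the probability space, and on each the value of $\mathcal{I}(\textsc{IsJR}(W,\profile_r))$ is fixed), and the only thing to observe is that the explicit encoding of $\Delta(\profileset)$ already contains $s$ as a polynomially-bounded quantity, so the naive enumeration is efficient. The remark before the theorem already notes this; the proof merely makes the bookkeeping precise.
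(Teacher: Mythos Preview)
Your proposal is correct and follows exactly the paper's own argument: the paper simply observes that each plausible profile is given explicitly in the input, that deterministic JR can be checked in polynomial time (Theorem~2 of~\citep{ABC+16a}), and that therefore the probability is obtained by summing~$\lambda_r$ over the profiles for which $W$ satisfies JR. Your write-up is a more detailed version of the same reasoning, with the running-time bookkeeping made explicit.
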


The decision variant of the {\sc JR-Probability} involves assessing whether the JR probability of $W$ exceeds a designated threshold, denoted as $\tau$. Notably, when $\tau$ is set to zero, the problem transforms into {\sc IsPossJR} which is NP-complete (Theorem~\ref{thm:IsPossJr-Lottery}). Therefore,

\begin{corollary}\label{cor:JR-prob-lottery}
{\sc JR-Probability}  is NP-hard for the Lottery model.
\end{corollary}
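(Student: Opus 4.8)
The plan is to observe that {\sc JR-Probability} is at least as hard as its decision variant, and that this decision variant already contains {\sc IsPossJR} as the special case $\tau = 0$. Concretely, I would first recall that the decision variant of {\sc JR-Probability} asks, given $W$ and a threshold $\tau$, whether $p(W \text{ satisfies } JR) > \tau$. Setting $\tau = 0$, this is precisely the question of whether $W$ satisfies JR with non-zero probability, i.e. an instance of {\sc IsPossJR}. By Theorem~\ref{thm:IsPossJr-Lottery}, {\sc IsPossJR} is NP-complete for the Lottery model (even in the restricted regime where each voter places positive probability on at most three approval sets and $k = \tfrac{n}{2}$), so the decision variant of {\sc JR-Probability} is NP-hard.

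Next I would spell out the (trivial) Turing reduction showing that the search/counting problem {\sc JR-Probability} inherits this hardness: if one could compute $p(W \text{ satisfies } JR)$ in polynomial time for the Lottery model, then one could answer the decision question ``$p(W \text{ satisfies } JR) > 0$?'' — and hence solve {\sc IsPossJR} — in polynomial time, contradicting Theorem~\ref{thm:IsPossJr-Lottery} unless $\mathrm{P} = \mathrm{NP}$. This yields NP-hardness of {\sc JR-Probability} for the Lottery model.

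There is essentially no technical obstacle here; the only point to be careful about is the precise meaning of ``exceeds the threshold'' in the decision variant (strict inequality with $\tau = 0$), which is exactly the non-zero-probability formulation of {\sc IsPossJR}, so the correspondence is exact. (One could alternatively argue directly from the reduction in the proof of Theorem~\ref{thm:IsPossJr-Lottery}, noting that in the constructed instance $W$ satisfies JR with positive probability iff the 3-SAT formula is satisfiable, so any algorithm returning the exact JR-probability decides satisfiability.)
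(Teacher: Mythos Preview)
Your proposal is correct and matches the paper's own argument essentially verbatim: the paper also observes that the decision variant of {\sc JR-Probability} with threshold $\tau = 0$ is exactly {\sc IsPossJR}, which is NP-complete for the Lottery model by Theorem~\ref{thm:IsPossJr-Lottery}, and concludes NP-hardness from that.
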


{{\sc JR-Probability} is \#P-complete for the 3VA model, and hence consequently for the Candidate Probability model. We prove by reducing from \#VertexCovers, the problem of counting the number of vertex covers for a given graph, that we will show is \#P-complete by a short proof.}

\begin{lemma}
\#VertexCovers (\#VC) is \#P-complete. 
\end{lemma}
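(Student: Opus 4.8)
The plan is to prove that \#VertexCovers is \#P-complete by establishing membership in \#P and then \#P-hardness via a parsimonious (or at least counting-preserving) reduction from a known \#P-complete problem. Membership is immediate: a nondeterministic polynomial-time Turing machine can guess a subset $S \subseteq V(G)$, verify in polynomial time that every edge has at least one endpoint in $S$, and accept exactly the vertex covers; hence the number of accepting paths equals the number of vertex covers.

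For hardness, the cleanest route is to reduce from \#IndependentSets (equivalently \#Antichains / counting independent sets in a graph), which is well known to be \#P-complete (it is the counting version of the NP-complete \textsc{IndependentSet}, and its \#P-completeness is standard, e.g.\ via Provan--Ball or Valiant-style arguments). The key observation is the classical bijection: $S$ is an independent set of $G$ if and only if $V(G)\setminus S$ is a vertex cover of $G$. This complementation map is a bijection on $2^{V(G)}$, so the number of vertex covers of $G$ equals the number of independent sets of $G$ exactly. Thus the identity map on the graph, together with output transformation ``return the answer unchanged,'' is a parsimonious reduction, and \#P-hardness of \#VertexCovers follows immediately.

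First I would state membership in \#P with the one-line guess-and-check argument. Next I would recall that \#IndependentSets is \#P-complete (citing a standard reference such as \citet{GaJo79a} for the NP-hardness of the decision version together with the fact that the natural verifier makes the counting version \#P-complete, or citing a counting-complexity reference). Then I would invoke the complementation bijection between independent sets and vertex covers to conclude that \#VertexCovers $=$ \#IndependentSets as functions of the input graph, giving a parsimonious reduction and hence \#P-hardness. Combining the two parts completes the proof.

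I do not anticipate a real obstacle here; the only point requiring care is citing an appropriate source for the \#P-completeness of the problem we reduce from (independent sets, or alternatively \#SAT-style arguments), since the paper says ``by a short proof'' and presumably wants a self-contained pointer. If one prefers to avoid relying on \#IndependentSets, an alternative is to reduce directly from \#SAT or \#MonotoneSAT, but that reduction is less transparent and not parsimonious in the naive form, so the complementation argument from \#IndependentSets is the approach I would take.
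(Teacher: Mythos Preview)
Your proposal is correct and matches the paper's approach: the paper also invokes the complementation bijection ($S$ is a vertex cover iff $V\setminus S$ is an independent set) and cites the \#P-completeness of counting independent sets to conclude. The paper's proof is even terser than yours (it omits the explicit membership argument), but the underlying idea is identical.
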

\begin{proof}
For any given graph, a set $S\subseteq V$ is a vertex cover if and only if the subset $V\setminus S$ is an independent set~\citep{GaJo79a}. It is well known that counting independent sets of a graph is  \#P-complete~\citep{Green00a}.
\end{proof}

{Solving {\sc JR-Probability} for the 3VA model involves counting the number of \plausible approval profiles for which the given $W$ satisfies JR.} 

%
\begin{theorem}\label{th:counting}
  {\sc JR-Probability} is \#P-complete for the 3VA model.

\end{theorem}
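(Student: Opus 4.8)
The plan is to establish both membership in \#P and \#P-hardness. Membership is essentially routine: a \plausible approval profile in the 3VA model is obtained by independently resolving each ``unknown'' entry ($p_{i,c}=0.5$) to $0$ or $1$, so the number of \plausible profiles is $2^{q}$ where $q$ is the number of unknown entries, and each has equal probability $2^{-q}$. Thus {\sc JR-Probability} amounts to counting the \plausible profiles for which $W$ satisfies JR, divided by $2^q$. Since checking whether $W$ satisfies JR for a fixed profile is polynomial-time (Theorem 2 in~\citep{ABC+16a}), this counting problem is in \#P; one can phrase it as counting accepting paths of a nondeterministic polynomial-time machine that guesses the resolution of the unknowns and verifies JR.

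For hardness I would reduce from \#VC (counting vertex covers), shown \#P-complete in the preceding lemma. Given a graph $G=(U,E)$ with $|U|=m$ vertices and $|E|=M$ edges, the idea is to build a 3VA instance whose \plausible profiles are in bijection (or a controlled many-to-one correspondence) with subsets $S\subseteq U$, with the JR-satisfying profiles corresponding exactly to vertex covers. The natural encoding: have one candidate $c_u$ per vertex $u$, and set up $W$ so that $W$ fails JR precisely when some ``bad'' cohesive group arises, where the cohesive group structure mirrors an uncovered edge. Concretely, for each edge $e=\{u,v\}$ I would introduce a block of $\frac{n}{k}$ voters who commonly approve a dedicated candidate $d_e \notin W$ exactly in the profiles where \emph{neither} $c_u$ nor $c_v$ is ``selected'' by the resolution, so that $d_e$ witnesses a JR violation iff edge $e$ is uncovered. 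The ``selection'' of a vertex $u$ should be tied to a single unknown entry (one voter, candidate $c_u$, probability $0.5$), so that the $2^m$ resolutions correspond to the $2^m$ subsets $S$; all other entries are set to $0$ or $1$ deterministically so they contribute no extra branching. Setting $k=\frac{n}{2}$ (or another convenient value) and padding with dummy voters/candidates as in the earlier {\sc ExistsNecJR} reductions keeps $\frac{n}{k}$ a clean integer and ensures no spurious cohesive groups. Then the number of JR-satisfying \plausible profiles equals the number of vertex covers of $G$, and multiplying by $2^{-q}$ recovers the probability, giving a parsimonious (or weakly parsimonious) reduction.

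The main obstacle is the gadget design: JR violations are triggered by \emph{existential} cohesive groups ($|V'|\ge n/k$ with a common approved candidate outside $W$), whereas vertex-cover membership is a \emph{conjunction} over edges, so I must ensure (i) the only candidates that can be commonly approved by $n/k$ unrepresented voters are the edge-witnesses $d_e$, (ii) $d_e$ is commonly approved by such a group iff edge $e$ is uncovered, and (iii) the resolution of each unknown does not accidentally create other JR-relevant groups (e.g.\ via the $c_u$ candidates themselves, so those should either be placed in $W$ or made approved by too few unrepresented voters). Controlling the common-intersection condition $\bigcap_{i\in V'}A_i\neq\emptyset$ is delicate because adding a voter's unknown entry for $c_u$ changes intersections; the cleanest fix is to use a \emph{separate} voter for each vertex's branching bit and keep the edge-voters' approval of $d_e$ governed by deterministic copies of that bit (realized by forcing those entries via $0$/$1$ rather than $0.5$), which requires a short argument that such copying is expressible in 3VA without extra branching. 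I would also double-check that dummy voters with empty approval sets never form a common-intersection group. Once the gadget is pinned down, verifying the bijection between vertex covers and JR-satisfying profiles, and hence \#P-hardness, is a straightforward case analysis.
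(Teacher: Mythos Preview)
Your membership argument and choice of source problem (\#VC) match the paper, but your gadget has a genuine gap. In the 3VA model every entry $p_{i,c}$ is resolved \emph{independently}; there is no mechanism for one voter's approval of $d_e$ to be ``governed by deterministic copies'' of another voter's random bit. You flag this yourself (``requires a short argument that such copying is expressible in 3VA without extra branching''), but in fact it is \emph{not} expressible: if you want an edge block's behaviour to depend on whether vertex $u$ is selected, the only way is to give those edge-voters their own $0.5$ entries, which blows up the number of \plausible profiles and destroys the bijection with subsets $S\subseteq U$. Your attempt to encode ``edge $e$ uncovered'' as a conditional approval of $d_e$ cannot be realized in this model.

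The paper's fix is the key idea you are missing: put the single unknown entry per vertex \emph{inside} $W$, so that resolving it toggles whether the corresponding voter is \emph{represented}, rather than trying to toggle which candidates are approved. Concretely, the paper takes one voter per vertex, sets $k=n/2$ so $n/k=2$, puts $W=\{c_1^+,\dots,c_k^+\}$ with every voter having $p_{i,c_1^+}=0.5$ (and $0$ for the other $c_j^+$), and for each edge $\{i,j\}$ introduces a candidate $c_{i,j}\notin W$ approved deterministically by exactly voters $i$ and $j$. Then a \plausible profile corresponds to the set $S$ of voters who approve $c_1^+$; voter $i$ is represented iff $i\in S$; and $W$ fails JR iff some pair $\{i,j\}$ with $c_{i,j}$ in common are both unrepresented, i.e., iff $S$ misses some edge. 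This gives a clean bijection between vertex covers and JR-satisfying profiles with no dependency gadgets needed.
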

\begin{proof}
We prove the statement by proving that    {\sc \#IsJR}, the problem of counting the number of \plausible approval profiles for which a given committee $W$ satisfies JR, is \#P-complete for the 3VA model.  Since the total number of  \plausible approval profiles can be computed  easily (the count is $\prod_{i\in V}2^{x_i}$ where $x_i$ is $|\{c\in C\mid p_{i,c}=0.5\}|$) and all \plausible approval profiles are equiprobable, the theorem will follow. 

Consider an instance $I' = G(V', E')$ of the {\sc \#VC }, where a graph $G$ is given with $V'$ as the set of $n$ vertices and $E'$ as the set of $m$ edges. 
We reduce to an instance  $ I = (V, C, [p_{i,c}], k, W)$ of {\sc \#IsJR} such that there is a one to one correspondence between 
vertex covers and \plausible approval profiles for which $W$ satisfies JR. We have one voter per vertex, so $|V|=n$.  For each edge $(i,j)$, $i<j$, in $E'$ we create a candidate $c_{i,j}$ and have voters $i$ and $j$ approve of $c_{i,j}$ with probability 1. Other voters disapprove of $c_{i,j}$. That is, $p_{i,c_{i,j}} = p_{j,c_{i,j}} =1$ and $p_{t,c_{i,j}} = 0, \forall t\in V\setminus \{i,j\}$.
%
We additionally create a set of $k = \frac{n}{2}$ candidates $C^+ = \{ c^+_1, c^+_2, ..., c^+_k \}$, so we have $|C|=m+k$. Each voter approves of $c^+_1$ with probability 0.5, i.e., $p_{i, c^+_1} = 0.5, \forall i\in V$, and disapproves of the rest of the candidates in $C^+$, i.e., $p_{i, c^+_j} = 0, \forall i\in V, \forall j\in[2,k]$. We set $W = C^+$.  Note that $\frac{n}{k} = 2$. This reduction can be done in polynomial time.

We claim that every vertex cover in $I'$ corresponds to a unique \plausible approval profile in $I$ for which $W$ satisfies JR, and vice versa.
Given a subset of vertices $S\subseteq V'$, we define a \plausible approval profile $\profile^S$ as follows. Voters $V^S\subseteq V$ corresponding to $S$ approve of $c^+_1$ 
and the remaining voters disapprove of $c^+_1$.

We first show that if $S$ is a vertex cover then $W$ satisfies JR with respect to $\profile^S$. $S$ being a vertex cover implies that at least one end point of every edge is in $S$. This implies that, in $\profile^S$, of any two voters who approve of the same candidate, at least one approves of $c^+_1$ also, and is hence represented in $W$. Hence $W$ satisfies JR.
%
Conversely, we show that if $S$ is not a vertex cover then $W$ does not satisfy JR in $\profile^S$. $S$ not being a vertex cover implies that there is at least one edge $(i,j)$ whose end points $i$ and $j$ are not in $S$. This implies that, in $\profile^S$, the corresponding voters $i$ and $j$ both approve of the same candidate $c_{i,j}$ and disapprove of $c^+_1$. Hence there is a set of candidates of size $\frac{n}{k} = 2$ who approve of the same candidate and neither is represented in $W$, implying that $W$ does not satisfy JR in $\profile^S$.
\end{proof}

 \begin{theorem}\label{thm:JR-prob-3VA-Pcase1}
{\sc JR-Probability} is in P for the 3VA model if $p_{i,c}\in \{0,1\}$, $\forall i \in V$ and $\forall c\in W$.
\end{theorem}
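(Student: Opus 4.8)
The plan is to exploit the fact that, under this restriction, whether a voter is represented by $W$ is \emph{deterministic}, so the only randomness relevant to JR lives on the candidates outside $W$, and that randomness decouples candidate by candidate.

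First I would recall the standard combinatorial reformulation of a JR violation: for a fixed approval profile, $W$ fails JR if and only if some candidate $c$ is approved by at least $\tfrac{n}{k}$ voters none of whom is represented in $W$. (From a witnessing group $V'$ one takes a common approved candidate $c$; conversely the set of all unrepresented approvers of such a $c$ is itself a witnessing group, since it has size $\ge |V'|\ge\tfrac{n}{k}$, common candidate $c$, and no represented voter.) Because $p_{i,c}\in\{0,1\}$ for every $i\in V$ and every $c\in W$, the set $V^* = \{\, i\in V : A_i\cap W=\emptyset \,\}$ of unrepresented voters is identical across all \plausible profiles and can be computed once. Moreover a voter in $V^*$ approves no candidate of $W$, so candidates in $W$ can never witness a violation; hence $W$ satisfies JR in a \plausible profile precisely when, for every $c\in C\setminus W$, the number of voters of $V^*$ that approve $c$ is at most $\lceil n/k\rceil - 1$.

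Next, for each $c\in C\setminus W$ I would partition $V^*$ by the value of $p_{i,c}$: let $D_c=\{i\in V^* : p_{i,c}=1\}$ and $U_c=\{i\in V^* : p_{i,c}=0.5\}$ (voters with $p_{i,c}=0$ never approve $c$). In a uniformly random \plausible profile the number of voters of $V^*$ approving $c$ is $|D_c|+X_c$ with $X_c\sim\mathrm{Binomial}(|U_c|,\tfrac12)$, and $X_c$ is a function only of the independent ``$0.5$-coins'' on column $c$. Since distinct candidates own disjoint sets of coins, the bad events $B_c:=\{\,|D_c|+X_c\ge n/k\,\}$ are mutually independent, so
\[
\Pr[W \text{ satisfies JR}] = \prod_{c\in C\setminus W}\bigl(1-\Pr[B_c]\bigr) = \prod_{c\in C\setminus W} 2^{-|U_c|}\sum_{j=0}^{\beta_c}\binom{|U_c|}{j},
\]
where $\beta_c=\lceil n/k\rceil - 1 - |D_c|$ (the inner sum read as $0$ if $\beta_c<0$ and as $2^{|U_c|}$ if $\beta_c\ge |U_c|$). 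Each factor is a partial binomial sum of at most $n+1$ terms with polynomially bounded bit-length, and the resulting rational has numerator and denominator at most $2^{nm}$, so the whole quantity is computed in polynomial time.

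The argument has no deep obstacle; the two steps needing care are (i) deriving the clean ``some candidate is over-approved by unrepresented voters'' characterization of a JR violation and pinning down the threshold $\beta_c$ (handling the ceiling and the degenerate cases $\beta_c<0$ or $\beta_c\ge|U_c|$), and (ii) the column-independence observation, which is precisely what collapses the exponential sum over \plausible profiles into a product of polynomially many partial binomial sums. As a sanity check one can note the same decomposition recovers the earlier polynomial-time claims for {\sc IsNecJR}/{\sc IsPossJR}: $W$ fails JR with probability $1$ iff $|D_c|\ge n/k$ for some $c$, and with positive probability iff $|D_c|+|U_c|\ge n/k$ for some $c$.
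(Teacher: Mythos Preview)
Your argument is correct and matches the paper's proof essentially step for step: both observe that the set of unrepresented voters is deterministic under the hypothesis, reformulate a JR violation as ``some $c\in C\setminus W$ is approved by at least $n/k$ unrepresented voters,'' and then use the column-wise independence of the $0.5$-coins to write the JR probability as a product over $c\in C\setminus W$ of partial binomial tails. Your $V^*$, $D_c$, $U_c$ are exactly the paper's $V'$, $n^1_j$, $n^u_j$, and your product formula is the complement form of the paper's $\prod_j(1-t_j/2^{n^u_j})$.
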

\begin{proof}
Let $V'$ be the set of voters who have not approved any candidate in $W$. Define $n^+_j$ (respectively, $n^1_j$ and $n^u_j$) as the number of voters from $V'$ who approved candidate $j$ with $p_{i,j} > 0$ (respectively, $p_{i,j} = 1$ and $p_{i,j} = 0.5 \}$). So $n^+_j = n^1_j + n^u_j$.
 If $\exists j \in C\setminus W$ such that $n^1_j \ge \frac{n}{k}$, then the probability of $W$ being JR is zero.
 In the case where $n^1_j < \frac{n}{k}$, we utilize the following procedure to calculate the probability of $W$ being JR. Note that since all \plausible approval profiles are equiprobable under 3VA, it is sufficient to count the number of \plausible approval profiles and those that violate JR.

For each candidate $j \in C\setminus W$, we count the number of plausible assignments of voters' unknown approvals under which $W$ does not satisfy JR because of $j$.
A plausible assignment is obtained by setting $p_{i, j}$  to 0 or 1 wherever $p_{i, j} = 0.5$. This results in $2^{n^u_j}$ plausible assignments. Out of these, we count the number of assignments that violate JR because of $j$, denoted as $t_j$, as follows:
$t_j = \sum_{l= \frac{n}{k} - n^1_j}^{n^u_j} {n_j^u \choose l}.$
Therefore, the probability that $W$ violates JR because of candidate $j$ is $p_j(W \text{ violates } JR) = \frac{t_j}{2^{n^u_j}}$. When $n^+_j < \frac{n}{k}$, $t_j = 0$.  
The probability of $W$ satisfying JR can be expressed as the product of individual probabilities, where each term corresponds to 1 minus the probability of $W$ violating JR because of each candidate $j \in C\setminus W$:
$p(W \text{ being JR })  = \prod_{j\in C\setminus W} \big(1 - p_j(W \text{ violates } JR)\big).$
 This computation can be done in polynomial time.
\end{proof}

The above process ultimately computes the count of \plausible approval profiles where $W$ satisfies JR, divided by the total number of \plausible approval profiles. 
It is essential to note that the same computation will not work when each profile has a different probability, as is the case in the candidate probability model. 
We also observe that for the 3VA model, the computation of {\sc JR-Probability} is achievable in polynomial time if $n=k$.
\begin{theorem}\label{thm:JR-prob-3VA-Pcase2}
In the 3VA model, counting the \plausible approval profiles where $W$ satisfies JR can be computed in polynomial time if $k = n$.
Hence, {\sc JRProbability} is in P when  $k=n$.
\end{theorem}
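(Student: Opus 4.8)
The plan is to exploit the fact that when $k=n$ we have $\frac{n}{k}=1$, which collapses JR to a purely per-voter condition. First I would show that, for a fixed approval profile $\profile$, the committee $W$ satisfies JR if and only if every voter $i$ with $A_i\neq\emptyset$ has $A_i\cap W\neq\emptyset$. Taking singleton groups $V'=\{i\}$ in Definition~\ref{def:JR} (legitimate since $|V'|=1\ge\frac{n}{k}$) forces this condition; conversely, if it holds then any group $V'$ with $\bigcap_{i\in V'}A_i\neq\emptyset$ consists of voters all having non-empty approval sets, each of whom is therefore represented, so a fortiori some member of $V'$ is represented.

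Next I would use the independence of the voters in the 3VA model. The event ``$W$ satisfies JR'' is the conjunction of the independent per-voter events ``voter $i$ is represented or abstains'', so the number of plausible profiles for which $W$ is JR equals $\prod_{i\in V} N_i$, where $N_i$ is the number of completions of voter $i$'s ballot — that is, choices of $A_i$ obtained by resolving, for each candidate $c$ with $p_{i,c}=0.5$, whether $c\in A_i$ — that satisfy the condition. Since the total number of plausible profiles is $\prod_{i\in V}2^{x_i}$ with $x_i=|\{c:p_{i,c}=0.5\}|$ and all are equiprobable, computing the $N_i$ yields both the requested count and, by a single division, {\sc JR-Probability}.

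It then remains to compute each $N_i$ in polynomial time, which is a short case analysis. Write $D_i=\{c:p_{i,c}=1\}$, $U_i=\{c:p_{i,c}=0.5\}$, $u_i=|U_i|$, $w_i=|U_i\cap W|$; every completion has the form $A_i=D_i\cup S$ with $S\subseteq U_i$. If $D_i\cap W\neq\emptyset$, voter $i$ is represented under every completion, so $N_i=2^{u_i}$. If $D_i\cap W=\emptyset$ but $D_i\neq\emptyset$, then $A_i\neq\emptyset$ always, so voter $i$ is satisfied exactly when $S$ meets $U_i\cap W$, giving $N_i=2^{u_i}-2^{u_i-w_i}$ (this is $0$ when $w_i=0$, correctly reflecting that $W$ is then never JR). If $D_i=\emptyset$, voter $i$ is satisfied when $S=\emptyset$ or $S$ meets $U_i\cap W$, giving $N_i=1+2^{u_i}-2^{u_i-w_i}$. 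Each $N_i$ is a non-negative integer bounded by $2^{|C|}$, hence representable in polynomially many bits, and so is their product; the whole computation is polynomial-time, and the theorem follows.

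I do not anticipate a genuine obstacle. The only point requiring care is the degenerate sub-case in which a voter definitely approves some candidates, all outside $W$, and can otherwise only possibly approve further candidates outside $W$: there $N_i=0$, the product vanishes, and this is exactly right because such a voter is never represented so $W$ is never JR. The ``main obstacle'', such as it is, is just getting the three per-voter counts right and confirming that the reduction of JR to the per-voter condition is tight when $\frac{n}{k}=1$.
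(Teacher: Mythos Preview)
Your proposal is correct and follows essentially the same approach as the paper: reduce JR at $k=n$ to the per-voter condition ``represented or empty ballot'', use independence to factor the count as $\prod_i N_i$, and compute each $N_i$ by the same three-way case split (your $u_i,w_i$ are the paper's $x_i,y_i$, and the formulas coincide after expanding $(2^{w_i}-1)2^{u_i-w_i}$). If anything, your write-up is slightly more careful in justifying the JR characterization and in flagging the $N_i=0$ degenerate case.
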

\begin{proof}
When $k=n$, we have $\frac{n}{k} = 1$ and so ensuring JR for a \plausible approval profile requires that each voter who approves at least one candidate must have at least one of their approved candidates represented in the winning committee $W$. Hence, for each voter $i$, the objective is 
to count the number of \plausible approval sets for $i$ under which either $i$ is represented in $W$ or $i$ does not approve of any candidate.
We let $t_i$ denote this count which can be computed in polynomial time as follows. If $i$ approves of any candidate in $W$ with probability 1 then $t_i$ will be $2^{x_i}$ where $x_i$ is the number of candidates $i$ is unsure about. Otherwise (if $i$ doesn't assign probability 1 to any candidate in $W$) then let $y_i$ be the number of candidates in $W$ that $i$ is unsure about. Then $t_i = (2^{y_i} -1) \times 2^{x_i-y_i}$ if there exists a candidate that $i$ approves with probability 1, and  $t_i = (2^{y_i} -1) \times 2^{x_i-y_i} +1$ otherwise (the last term counts for the case where $i$ does not approve of any candidate.) 
The overall number of \plausible approval profiles where $W$ satisfies JR is the product of these individual counts, i.e., $\prod_i {t_i}$.

%
\end{proof}

Note that {\sc ExistsNecJR} reduces to {\sc MaxJR}. The following result then follows from the NP-completeness of  {\sc ExistsNecJR} for all of our four uncertainty models. 
\begin{corollary} \label{cor:MaxJR}
{\sc MaxJR} is NP-hard for all of our four uncertainty models.
\end{corollary}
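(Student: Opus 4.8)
The plan is to obtain the NP-hardness of {\sc MaxJR} as an immediate consequence of the NP-completeness of {\sc ExistsNecJR}, which was established for every one of the four uncertainty models (Theorems~\ref{thm:ExistsNecJRCandidate} and~\ref{thm:ExistsNecJRLotteryJoint}, and Corollary~\ref{thm:ExistsNecJR3VA}). The engine of the argument is the observation, already recorded in \Cref{sec:JR-probability}, that {\sc ExistsNecJR} reduces to {\sc MaxJR}: a committee maximising the probability of satisfying JR attains probability $1$ if and only if \emph{some} committee is necessarily JR.

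Concretely, I would set up a polynomial-time Turing reduction. Given an instance of {\sc ExistsNecJR} in any of the four models, feed the same uncertain-preference input to a hypothetical {\sc MaxJR} solver, obtaining a committee $W^*$ of maximum JR-probability $p^*$. The input is a YES-instance of {\sc ExistsNecJR} exactly when $p^* = 1$, i.e.\ exactly when $W^*$ satisfies JR with probability one. Crucially, one does not need to compute the number $p^*$ (which may be $\#$P-hard or NP-hard to evaluate, cf.\ Theorem~\ref{th:counting} and Corollary~\ref{cor:JR-prob-lottery}): it suffices to run {\sc IsNecJR} on $W^*$, which is in P for all four models (the remark for Joint Probability, Theorem~\ref{thm:IsNecJR-Lottery} for the Lottery model, and the corollaries of~\citep{IBK22a} for the Candidate-Probability and 3VA models). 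This gives the desired equivalence, so {\sc ExistsNecJR} reduces in polynomial time to {\sc MaxJR}, and {\sc MaxJR} inherits NP-hardness in each model.

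I do not expect a genuine obstacle here; the statement is a short corollary. The only point requiring a little care is that {\sc MaxJR} is a search problem rather than a decision problem, so the reduction is naturally of Turing type and the hardness should be phrased accordingly; equivalently, one can pass to the natural decision version of {\sc MaxJR} (``does some committee have JR-probability at least $\tau$?'') and observe that setting $\tau = 1$ makes it coincide with {\sc ExistsNecJR}, which again yields the claim.
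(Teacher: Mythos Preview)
Your proposal is correct and follows exactly the paper's approach: the paper simply notes that {\sc ExistsNecJR} reduces to {\sc MaxJR} and invokes the NP-completeness of {\sc ExistsNecJR} in all four models. Your additional observation that one can use the polynomial-time {\sc IsNecJR} check on the returned committee (rather than computing~$p^*$) is a nice clarification that the paper leaves implicit.
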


\section{Results for PJR and EJR}
\label{sec:PJR-and-EJR}

In this section, we provide answers to the computational problems under the stronger concepts of PJR and EJR. Our proofs rely on the coNP-complete problems of {\sc IsEJR} and {\sc IsPJR} (see ~\citep{ABC+16a,AEH+18a}). {\sc IsEJR} (resp. {\sc IsPJR}) is the problem of deciding, given an instance $(V,C,\profile,k)$ of ABC voting and a committee $W$, whether $W$ satisfies EJR (resp. PJR). The coNP-completeness of these problems gives us the following result immediately.

\begin{corollary}\label{cor:IsNecEJR-IsPossEJR}
{\sc IsNecEJR} (resp. {\sc IsNecPJR}), the problem of deciding whether a given committee $W$ satisfies EJR (resp. PJR) with probability 1, and {\sc IsPossEJR} (resp. {\sc IsPossPJR}), the problem of deciding whether a given committee $W$ satisfies EJR (resp. PJR) with non-zero probability, are both coNP-complete for all of our four uncertainty models.
\end{corollary}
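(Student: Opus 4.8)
The approach is to derive everything from the coNP-completeness of the two deterministic problems {\sc IsEJR} and {\sc IsPJR}~\citep{ABC+16a,AEH+18a}, arguing hardness and membership in coNP separately. For hardness I would observe that any deterministic ABC instance $(V,C,\profile,k)$ together with a committee $W$ is realizable as a trivial instance of each of the four models: a Joint Probability instance with $s=1$, $\lambda_1=1$; a Lottery instance with each $\Delta_i$ the point mass on $A_i$; and a Candidate-Probability (hence also a 3VA) instance with $p_{i,c}=1$ for $c\in A_i$ and $p_{i,c}=0$ otherwise. In each case $\profile$ is the unique \plausible profile, so ``$W$ is EJR with probability $1$'' and ``$W$ is EJR with non-zero probability'' both coincide with ``$W$ is EJR for $\profile$'', and similarly for PJR. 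Thus {\sc IsEJR} reduces to each of {\sc IsNecEJR} and {\sc IsPossEJR} in all four models, and {\sc IsPJR} to {\sc IsNecPJR} and {\sc IsPossPJR}, giving coNP-hardness throughout.

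Membership of the necessity problems in coNP is the easy direction: a certificate that $W$ is \emph{not} EJR (resp.\ PJR) with probability $1$ consists of a \plausible profile $\profile$ --- one of the listed $\profile_r$ for the Joint model, a choice $S_i$ from each voter's support for the Lottery model, or an approval set $A_i$ for each $i$ consistent with the $p_{i,c}$'s for the Candidate-Probability and 3VA models, in every case succinct and polynomial-time checkable as \plausible --- together with an $\ell$-cohesive group witnessing the EJR (resp.\ PJR) violation in $\profile$, which is polynomial-time verifiable. Hence {\sc IsNecEJR} and {\sc IsNecPJR} are in coNP.

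For the possibility problems, the Joint Probability model is immediate: $W$ is EJR with non-zero probability iff $W$ is EJR for \emph{some} listed profile $\profile_r$, a polynomial-size disjunction of {\sc IsEJR} instances; its negation is a polynomial-size conjunction of NP statements (``$\profile_r$ violates EJR'') whose certificates concatenate, so {\sc IsPossEJR}, and likewise {\sc IsPossPJR}, lies in coNP. For the Lottery, Candidate-Probability and 3VA models I would instead certify the complement, ``$W$ violates EJR under \emph{every} \plausible profile'', by exhibiting a \emph{robustly violating group}: a set of voters $V'$ and an integer $\ell$ with $|V'|\ge \ell\,\frac{n}{k}$ such that $V'$ jointly approves $\ell$ candidates in every \plausible profile (in the Candidate-Probability model, $\ell$ candidates $c$ with $p_{i,c}=1$ for all $i\in V'$; in the Lottery model, $\ell$ candidates common to the supports of all $i\in V'$) while no $i\in V'$ can ever obtain $\ell$ representatives inside $W$. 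Such a group witnesses the violation under all \plausible profiles, so establishing that one always exists would place {\sc IsPossEJR} and {\sc IsPossPJR} in coNP for these models as well.

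The step I expect to be the main obstacle is proving this certificate \emph{complete}: that if $W$ violates EJR (resp.\ PJR) under every \plausible profile, then a robustly violating group must exist. This is delicate precisely because EJR and PJR are not monotone under either enlarging or shrinking approval sets (enlarging a ballot can forge new cohesive groups while also improving representation, and shrinking does the opposite), so one cannot reduce {\sc IsPossEJR} to a single {\sc IsEJR} query on a ``canonical'' \plausible profile such as the all-forced profile or the one maximizing approvals inside $W$ --- small instances already show a \plausible profile can be EJR even when those canonical profiles are not. The completeness argument therefore has to reason about an arbitrary choice of ballots within the model's constraints, isolating the forced part of each relevant common intersection (which persists across all \plausible profiles) and weighing it against the representation attainable inside $W$, with particular care at cohesive levels $\ell\ge 2$ where the shared candidates split between $W$ and its complement. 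The remaining details are routine bookkeeping over the four models.
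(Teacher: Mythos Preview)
Your hardness reduction and your membership argument for the necessity variants are correct and are exactly what the paper's one-line justification (``follows immediately from the coNP-completeness of \textsc{IsEJR} and \textsc{IsPJR}'') has in mind.

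You go further than the paper by attempting to justify coNP \emph{membership} for the possibility variants; your Joint Probability argument is fine. For the Lottery, Candidate-Probability and 3VA models, however, the ``robustly violating group'' certificate you propose is not complete, and the obstacle you flag is real. A small Lottery counterexample: take $n=4$, $k=2$, $C=\{a,b,c,d\}$, $W=\{c,d\}$; voter~$1$ approves $\{a\}$ deterministically, voter~$3$ approves $\{b\}$ deterministically, voter~$4$ approves $\emptyset$, and voter~$2$ approves $\{a\}$ or $\{b\}$ each with probability~$\tfrac12$. In each of the two \plausible profiles exactly one of $\{1,2\}$ or $\{2,3\}$ is $1$-cohesive and unrepresented, so $W$ violates EJR (indeed JR) in every \plausible profile; yet no single pair is cohesive in \emph{both} profiles, and no larger group is ever cohesive, so no robustly violating group exists. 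Your certificate scheme therefore cannot work as stated, and the ``remaining details'' are not routine bookkeeping.

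This leaves coNP membership of \textsc{IsPossEJR}/\textsc{IsPossPJR} in the non-Joint models unproved --- by you, and equally by the paper, whose remark substantiates only hardness. The upper bound either needs a genuinely different certificate idea or the claim should be weakened to coNP-hardness.
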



\begin{theorem}\label{thm:ExistsNecEJR-3VA-CP}
{\sc ExistsNecEJR} (resp. {\sc ExistsNecPJR}), the problem of deciding whether there exists a committee $W$ that satisfies EJR (resp. PJR) with probability 1 is coNP-hard for the Candidate-Probability and 3VA models.
\end{theorem}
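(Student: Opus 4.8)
The plan is to reduce from {\sc IsEJR} (resp.\ {\sc IsPJR}), which is coNP-complete~\citep{ABC+16a,AEH+18a}. Since every 3VA instance is a special case of a Candidate-Probability instance and induces the same set of \plausible profiles, it suffices to prove coNP-hardness for the 3VA model. So, starting from a deterministic ABC instance $(V,C,\profile,k)$ together with a committee $W$ with $|W|=k$, I would build a 3VA instance $I'$ in which a designated committee $W\cup C^{+}$ (with $C^{+}$ a small set of fresh ``padding'' candidates) is the \emph{only} committee of the prescribed size that can possibly be necessarily EJR (resp.\ PJR), and for which being necessarily EJR (resp.\ PJR) is equivalent to $W$ being EJR (resp.\ PJR) in $(V,C,\profile,k)$. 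Chaining these two facts yields the required equivalence of YES-instances.

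First I would preprocess: replacing each voter by $k$ identical copies lets me assume $k\mid n$ and $n\ge k$ without affecting whether $W$ satisfies EJR/PJR, since $\ell$-cohesive violating witnesses are preserved under taking copies. Then I would construct $I'=(V',C',[p_{i,c}],k')$ as follows. Keep $V$ with deterministic ballots, i.e.\ $p_{i,c}\in\{0,1\}$ for $c\in C$ and $p_{i,c}=0$ for every newly added candidate. Add a set $C^{+}$ of $k+1$ fresh candidates and a block $V^{+}$ of $\tfrac{n}{k}$ fresh voters with $p_{i,c}=0.5$ for all $c\in C^{+}$ and $p_{i,c}=0$ for all $c\in C$. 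For each $w\in W$, add a block $D_w$ of $\tfrac{n}{k}$ fresh voters whose (deterministic) approval set is exactly $\{w\}$. Finally set $k'=2k+1$. A short calculation gives $|V'|=2n+\tfrac{n}{k}$ and $\tfrac{|V'|}{k'}=\tfrac{n}{k}$, so the $1$-cohesiveness threshold in $I'$ coincides with the original one; the construction is clearly polynomial.

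The correctness argument would split into two parts. For \emph{uniqueness}: if a size-$k'$ committee $W'$ omits some $w\in W$, then $D_w$ is a $1$-cohesive unrepresented group of $\tfrac{n}{k}=\tfrac{|V'|}{k'}$ voters in \emph{every} \plausible profile, so $W'$ fails JR (hence EJR and PJR) with probability $1$; if $W'$ omits some $c^{+}\in C^{+}$, then in the \plausible profile in which every voter of $V^{+}$ approves exactly $\{c^{+}\}$ there is again a $1$-cohesive unrepresented group of $\tfrac{n}{k}$ voters, so $W'$ fails JR with positive probability. Since $|W\cup C^{+}|=k+(k+1)=k'$, the only surviving committee is $W\cup C^{+}$. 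For \emph{equivalence}, I would fix $W'=W\cup C^{+}$ and an arbitrary \plausible profile and case-split on a cohesive group $V^{*}$: if $V^{*}\subseteq V$, then since $V$-voters approve only within $C$ and $W'\cap C=W$, the group sees exactly the committee $W$ under exactly the original threshold, so it witnesses a violation iff it does so for $W$ in $(V,C,\profile,k)$; if $V^{*}$ contains a voter of $V^{+}$, then it consists only of $V^{+}$-voters (no other voter type shares a candidate with them) and its commonly approved candidates lie in $C^{+}\subseteq W'$, so no violation; if $V^{*}$ meets some $D_w$, then its common candidate must be $w\in W\subseteq W'$, so $V^{*}$ is $1$-cohesive and represented. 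Hence $W\cup C^{+}$ is necessarily EJR (resp.\ PJR) in $I'$ exactly when no purely-$V$ group witnesses a violation, i.e.\ exactly when $W$ is EJR (resp.\ PJR) originally.

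The main obstacle is the bookkeeping in the construction: I must force the committee down to $W\cup C^{+}$ \emph{without} inflating the voter-to-seat ratio and thereby relaxing the $\tfrac{n}{k}$ cohesiveness threshold that governs the original voters, since naive forcing gadgets built from extra voters would weaken the EJR/PJR requirement on $V$ and destroy the correspondence. The fix is to (a) exploit the 3VA uncertainty so that the single block $V^{+}$ of only $\tfrac{n}{k}$ voters forces all $k+1$ padding candidates, using a different \plausible profile for each, and (b) choose $|C^{+}|=k+1$ and $|V^{+}|=|D_w|=\tfrac{n}{k}$ so that $\tfrac{|V'|}{k'}=\tfrac{n}{k}$ holds exactly; the duplication preprocessing is what makes these block sizes integral. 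The secondary point to verify carefully is that the three voter types have pairwise-disjoint candidate pools apart from the single candidate $w$ each $D_w$ shares with $V$, so the only mixed cohesive groups are the harmless $1$-cohesive ones around a $w\in W$.
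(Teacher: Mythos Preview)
Your argument is correct and follows the same high-level strategy as the paper: reduce from {\sc IsEJR}/{\sc IsPJR}, preserve the ratio $\frac{n}{k}$, and add gadget voters so that the \emph{only} committee of the target size that could possibly be necessarily EJR/PJR is $W\cup C^{+}$, whence the answer collapses to whether $W$ is EJR/PJR on the original voters. The case analysis you sketch (pure $V$, pure $V^{+}$, and groups touching a $D_w$) goes through, and your divisibility preprocessing is sound.

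Where you differ from the paper is in the forcing gadget. The paper adds a single block $V^{+}$ of $n$ new voters, each of whom is uncertain (probability in $(0,1)$, or $0.5$ for 3VA) about \emph{every} candidate in $W\cup C^{+}$ and disapproves everything in $C\setminus W$; with $|C^{+}|=k$ and $k'=2k$ this already yields $\frac{|V'|}{k'}=\frac{n}{k}$ and forces $W\cup C^{+}$ directly via \Cref{lem:fullset}-style reasoning, with no need for the deterministic $D_w$ blocks or the $k$-fold duplication. The payoff of the paper's route is economy: one voter type, no preprocessing, and the mixed-group case is trivial because any $V^{+}$ voter's entire approval set lies inside $W'=W\cup C^{+}$. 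Your construction buys nothing extra, but it is not wrong --- just heavier machinery for the same reduction.
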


\begin{proof}
We provide the proof for {\sc ExistsNecEJR} . The proof for {\sc ExistsNecPJR} is identical and requires only to substitute {\sc IsEJR} with {\sc IsPJR}.
We prove {\sc ExistsNecEJR} is coNP-hard by reducing from {\sc IsEJR} that is known to be coNP-complete ~\citep{ABC+16a}.

Given an instance $I=(V, C, \profile, k, W)$ of {\sc IsEJR},  we reduce to an instance $I'=(V', C', [p_{i,c}], k')$ of {\sc ExistsNecEJR} under the Candidate-Probability model as follows. Suppose $|V|=n$. Let $V' = V \cup V^+$ where in $V^+$ we have $n$ new voters. Let $C'=C\cup C^+$ where in $C^+$ we have $k$ new candidates $\{c^+_{1},\ldots, c^+_{k}\}$. Each voter in $V$ has the same approval set as in $\profile$, implying that for each voter $i\in V$, $p_{i,c} = 1$, $\forall c\in A_i$ and $p_{i,c'}=0$, $\forall c'\in C'\setminus A_i$. Each voter $i\in V^+$ disapproves of all candidates in $C\setminus W$ (i.e. $p_{i,c}=0, \forall c\in C\setminus W$) and finds all the other candidates possibly acceptable (i.e. $0< p_{i,c}<1, \forall c\in W\cup C^+$)\footnote{Set $p_{i,c} = 0.5$, in case of 3VA model.}. Let $k'=2k$. As $|V'| = 2|V|$ we have that $\frac{|V'|}{k'} = \frac{n}{k}$. This reduction can be done in polynomial time. It remains to demonstrate that every YES instance of $I$ is a YES instance of $I'$, and vice versa.

Suppose $I$ is a YES instance of {\sc IsEJR}. Then $W$ satisfies EJR in $I$. Let $W'= W \cup C^+$. Clearly $|W'|=2k=k'$. We claim that $W'$ satisfies EJR under all \plausible approval profiles of $I'$. Voters in $V^+$ do not approve of any candidate in $C\setminus W$; by construction, we have that $C^+ \subset W'$ and $W\subset W'$. Therefore, 
$W'$ satisfies EJR for every $\ell\in[1,k]$ and every $\ell$-cohesive group of voters in $V'$.
Therefore, $W'$ satisfies EJR with probability 1 and $I'$ is a YES instance.

If $I$ is a NO instance of {\sc IsEJR}, then $W$ violates EJR w.r.t. a set of voters in $V$. 
In $I'$, voters in $V^+$ do not approve of any candidate in $C\setminus W$ and can possibly approve every candidate in $C^+\cup W$. Consequently, to satisfy EJR under all \plausible approval profiles, a winning committee $W'$ must include all candidates in $C^+\cup W$.  
As $|C^+\cup W| = 2k = k'$, $W'$ cannot include any other candidate. Since voters from $V$ do not approve of any candidates from $C^+$ and $W$ does not satisfy EJR with respect to $V$ and $\frac{n}{k}$, 
$W'$ cannot satisfy JR for {any \plausible approval profile}.
As a result, $I'$ is a NO instance of {\sc ExistsNecEJR}.
\end{proof}

We can show that {\sc ExistsNecEJR} (resp. {\sc ExistsNecPJR}) is coNP-hard for the Lottery and Joint Probability models, using an almost identical argument as to the one presented in the proof of Theorem~\ref{thm:ExistsNecEJR-3VA-CP} . The main variation is the instance reduction due to a different problem setting. We omit the proof due to shortage of space. 

\begin{theorem}\label{thm:ExistsNecEJR-Lottery-JP}
{\sc ExistsNecEJR} (resp. {\sc ExistsNecPJR})
is coNP-hard for the Lottery and Joint Probability models.
\end{theorem}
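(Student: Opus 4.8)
The plan is to mirror the proof of Theorem~\ref{thm:ExistsNecEJR-3VA-CP} almost verbatim, changing only the encoding of the auxiliary voters so that it fits the Lottery and Joint Probability models — exactly as Theorem~\ref{thm:ExistsNecJRLotteryJoint} adapted Theorem~\ref{thm:ExistsNecJRCandidate}. We again reduce from {\sc IsEJR} (for the EJR statement) and from {\sc IsPJR} (for the PJR statement), both of which are coNP-complete~\citep{ABC+16a,AEH+18a}. Given an instance $I=(V,C,\profile,k,W)$ of {\sc IsEJR} with $|V|=n$, we build $I'$ with voter set $V'=V\cup V^+$, where $V^+$ is a set of $n$ fresh voters, and candidate set $C'=C\cup C^+$, where $C^+=\{c^+_1,\dots,c^+_k\}$ consists of $k$ fresh candidates; we set $k'=2k$, so that $\frac{|V'|}{k'}=\frac{n}{k}$. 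In both models the original voters $i\in V$ keep their approval set $A_i$ deterministically (with probability $1$), and the only thing that differs between the two models is the description of the fresh voters. In the \emph{Lottery model} each $i\in V^+$ gets the distribution $\Delta_i(\{c\})=\frac{1}{2k}$ for every $c\in W\cup C^+$ (note $|W\cup C^+|=2k$) and probability $0$ elsewhere, so for each single candidate $c\in W\cup C^+$ there is positive probability that any prescribed $\frac{n}{k}$ of the fresh voters all have approval set exactly $\{c\}$. In the \emph{Joint Probability model} we use $2k$ \plausible profiles, one indexed by each $c\in W\cup C^+$, each with probability $\frac{1}{2k}$: in the profile indexed by $c$, every voter in $V$ approves her set $A_i$ and every voter in $V^+$ has approval set $\{c\}$.

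In either model the correctness argument runs exactly as in Theorem~\ref{thm:ExistsNecEJR-3VA-CP} and rests on two observations. First, for every $c\in W\cup C^+$ there is a \plausible profile in which $\frac{n}{k}$ fresh voters form a $1$-cohesive group whose only commonly approved candidate is $c$; hence any committee that satisfies EJR (indeed merely JR) with probability $1$ must contain $c$, so it must contain all of $W\cup C^+$, and since $|W\cup C^+|=2k=k'$ it must equal $W\cup C^+$. Second, restricting any \plausible profile to $V$ reproduces the original instance $I$: the thresholds $\ell\cdot\frac{n}{k}$ are unchanged under the doubling, voters in $V$ never approve candidates in $C^+$ (so $|A_i\cap(W\cup C^+)|=|A_i\cap W|$), and fresh voters — whose common approval sets are singletons — never form an $\ell$-cohesive group with $\ell\ge 2$ and are trivially satisfied by $W\cup C^+$. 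Combining the two: if $I$ is a YES instance of {\sc IsEJR}, then $W'=W\cup C^+$ satisfies EJR under every \plausible profile, witnessing a YES instance of {\sc ExistsNecEJR}; if $I$ is a NO instance, then the only candidate committee, $W\cup C^+$, fails EJR on some \plausible profile (the witnessing cohesive group from $I$ survives in $V'$), so $I'$ is a NO instance. The PJR statement is obtained by replacing {\sc IsEJR} with {\sc IsPJR} throughout.

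The main obstacle is getting the auxiliary distributions to do two opposing jobs at once: every $c\in W\cup C^+$ must be ``demanded'' by a cohesive group in some \plausible profile (to pin the probability-$1$ committee down to exactly $W\cup C^+$), yet in the YES case no \plausible profile may create a cohesive group that $W\cup C^+$ fails to satisfy. Restricting the fresh voters to singleton approval sets drawn from $W\cup C^+$ is precisely what makes both hold simultaneously. A secondary point to verify carefully is that doubling both the electorate and the committee size preserves $\frac{n}{k}$, hence preserves $\ell$-cohesiveness of every subset of $V$, which is what lets the second observation transport EJR/PJR faithfully between $I$ and $I'$.
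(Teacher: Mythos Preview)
Your proposal is correct and essentially identical to the paper's own proof: the paper uses exactly the same reduction (fresh voters with singleton approval sets uniformly over $W\cup C^+$, yielding $2k$ \plausible profiles in the Joint Probability model and independent per-voter lotteries in the Lottery model) and defers the correctness argument back to Theorem~\ref{thm:ExistsNecEJR-3VA-CP}. If anything, your write-up is more explicit than the paper's, since you spell out why mixed cohesive groups containing a fresh voter can only have $\ell=1$ and are therefore automatically satisfied by $W\cup C^+$.
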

\begin{proof}
The proof argument is identical to that presented in Theorem~\ref{thm:ExistsNecEJR-3VA-CP}, with the only variation being the instance reduction due to a different problem setting. Therefore, we present the instance reductions in the settings of lottery and joint probability models below, building upon the proof argument mentioned earlier in Theorem~\ref{thm:ExistsNecEJR-3VA-CP}.

\noindent\emph{\textbf{ Lottery model:}} Given an instance $I=(V, C, \profile, W, k)$ of {\sc IsEJR},  we reduce to an instance  $I'=(V', C', [\Delta_i], k')$ of {\sc ExistsNecEJR} under the Lottery model as follows. Suppose $|V|=n$. Let $V' = V \cup V^+$ where in $V^+$ we have $n$ new voters. Let $C'=C\cup C^+$ where in $C^+$ we have $k$ new candidates $\{c^+_{1},\ldots, c^+_{k}\}$. Each voter in $V$ has the same approval set as in $\profile$, implying that for each voter $i\in V$, $\Delta_i(A_i) = 1$. For each voter $i\in V^+$ we have $\Delta_i(\{c_j\}) = \frac{1}{2k},~ \forall c_j\in W\cup C^+$. Let $k'=2k$. As $|V'| = 2|V|$ we have that $|V'|/k' = |V|/k=n/k$. 

\noindent\emph{\textbf{Joint Probability model:}} Given an instance $I=(V, C, \profile, W, k)$ of {\sc IsEJR},  we reduce to an instance  $I'=(V', C', \Delta, k')$ of {\sc ExistsNecEJR} under the Joint Probability model as follows. Suppose $|V|=n$. Let $V' = V \cup V^+$ where in $V^+$ we have $n$ new voters. Let $C'=C\cup C^+$ where in $C^+$ we have $k$ new candidates $\{c^+_{1},\ldots, c^+_{k}\}$. Let $\Delta(A_1, A_2, \ldots, A_n, \{c_j\}, \ldots, \{c_j\}) = \frac{1}{2k}$, $\forall c_j\in W\cup C^+$. Let $k'=2k$. As $|V'| = 2|V|$ we have that $|V'|/k' = |V|/k=n/k$.

The above reductions are polynomial time. It is easy to see, following similar arguments presented in the proof of Theorem~\ref{thm:ExistsNecEJR-3VA-CP}, that every YES instance of I is a YES instance of I' and vice versa under both models.
\end{proof}

{\sc MaxEJR} (resp. {\sc MaxPJR}) is the problem of computing a committee $W$ that has the highest probability of satisfying EJR (resp. PJR). Note that {\sc ExistsNecEJR} (resp. {\sc ExistsNecPJR}) reduces to {\sc MaxEJR} (resp. {\sc MaxPJR}). 
The following result then follows from the coNP-hardness of {\sc ExistsNecEJR} (resp. {\sc ExistsNecPJR}).
\begin{corollary} \label{cor:MaxEJR-MaxPJR}
{\sc MaxEJR} (resp. {\sc MaxPJR}) is coNP-hard for all of our four uncertainty models.
\end{corollary}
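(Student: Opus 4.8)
The plan is to derive the statement as an immediate consequence of the coNP-hardness of {\sc ExistsNecEJR} (resp.\ {\sc ExistsNecPJR}), which was established for all four uncertainty models in Theorems~\ref{thm:ExistsNecEJR-3VA-CP} and~\ref{thm:ExistsNecEJR-Lottery-JP}, by exhibiting a trivial reduction from the latter problem into the decision version of {\sc MaxEJR} (resp.\ {\sc MaxPJR}). The point preceding the statement --- that {\sc ExistsNecEJR} reduces to {\sc MaxEJR} --- is exactly what I would make precise.

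Concretely, I would first fix the natural decision version of {\sc MaxEJR}: given voters' uncertain preferences (in any one of the four models) and a rational threshold $\tau \in [0,1]$, decide whether there is a committee $W$ of size $k$ with $\Pr[W \text{ satisfies EJR}] \ge \tau$. Solving {\sc MaxEJR} answers this question, so it suffices to show the decision version is coNP-hard. Now observe that setting $\tau = 1$ collapses this decision problem to exactly {\sc ExistsNecEJR}: a committee satisfies EJR with probability $\ge 1$ precisely when it satisfies EJR with probability $1$. This is the same mechanism used in Corollary~\ref{cor:JR-prob-lottery}, where $\tau = 0$ collapsed the decision version of {\sc JR-Probability} to {\sc IsPossJR}. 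Hence every instance of {\sc ExistsNecEJR} is, verbatim after padding with $\tau = 1$, an instance of the decision version of {\sc MaxEJR}; this is a polynomial-time many-one reduction, so coNP-hardness transfers. The argument for {\sc ExistsNecPJR} and {\sc MaxPJR} is word-for-word identical, invoking the PJR halves of Theorems~\ref{thm:ExistsNecEJR-3VA-CP} and~\ref{thm:ExistsNecEJR-Lottery-JP}.

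An equivalent way to phrase it, closer to a Turing reduction, is: query a {\sc MaxEJR} oracle to obtain an optimal committee $W^\star$ with probability $p^\star$, and answer ``yes'' to {\sc ExistsNecEJR} iff $p^\star = 1$; the only extra step is one {\sc IsNecEJR} test on $W^\star$, which is in coNP by Corollary~\ref{cor:IsNecEJR-IsPossEJR}, so a polynomial-time {\sc MaxEJR} would place {\sc ExistsNecEJR} in coNP. Either framing works, and there is essentially no remaining obstacle: all the technical content already lives in the coNP-hardness proofs of {\sc ExistsNecEJR}/{\sc ExistsNecPJR}. The only things to be careful about are (i) stating the decision version of {\sc MaxEJR} so that its $\tau = 1$ specialization is literally {\sc ExistsNecEJR}, and (ii) checking the reduction is polynomial-time computable, which it trivially is since it merely appends the constant $\tau = 1$.
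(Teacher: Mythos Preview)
Your first framing---specialising the decision version of {\sc MaxEJR} to $\tau=1$ so that it coincides with {\sc ExistsNecEJR}---is correct and is exactly the paper's argument, just stated more carefully. One caution on your alternative Turing-style framing: concluding that a polynomial-time {\sc MaxEJR} oracle would place {\sc ExistsNecEJR} \emph{in coNP} (via a coNP {\sc IsNecEJR} check on $W^\star$) does not establish coNP-hardness of {\sc MaxEJR}, since a coNP-hard problem landing in coNP is no contradiction; for that route to work you need the oracle to also return $p^\star$ (so the post-processing is polynomial, placing {\sc ExistsNecEJR} in P), which is what the paper implicitly assumes and what your first framing makes unnecessary.
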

\begin{proof}
{\sc ExistsNecEJR} (resp. {\sc ExistsNecPJR}), the problem of finding a committee $W$ with a probability of being EJR (resp. PJR) equal to 1 was proven to be coNP-hard for all the four uncertainty models: 3VA, Candidate Probability, Lottery and Joint Probability models (refer Table~\ref{table:summary:uncertainABC}). If there exist a polynomial time algorithm to solve  {\sc MaxEJR} (resp. {\sc MaxPJR}) in any of these four models, it can be used to solve {\sc ExistsNecEJR} (resp. {\sc ExistsNecPJR}) in the corresponding model. Thus, {\sc MaxEJR} and {\sc MaxPJR} are coNP-hard for the 3VA, Candidate Probability, Lottery and Joint Probability models. 
\end{proof}
\vspace{-1em}
\section{Conclusion}

We have formalized uncertain preferences models for a well-studied committee selection problem. We focused primarily on a fundamental concept of proportional representation called justified representation (JR). Our work suggests several directions including exploring other axiomatic properties such as Pareto optimality, welfare maximization, or other notions of fairness.  In our study, the central problem is {\sc MaxJR}. It will be interesting to explore approximation or fixed-parameter tractable (FPT) algorithms for this problem. Finally, another direction is to explore restricted versions of approval preferences~\citep{ELP22a,ZKO21a}.

\paragraph{Acknowledgments}
This work was supported by the NSF-CSIRO grant on “Fair Sequential Collective Decision-Making" (Grant  No. RG230833). Venkateswara Rao Kagita is grateful to Science and Engineering Research Board (SERB) for providing financial support for this research through the SERB-SIRE project (Project Number: SIR/2022/001217).  Mashbat Suzuki is supported by the ARC Laureate Project FL200100204 on ``Trustworthy AI''. Baharak Rastegari is grateful to UNSW Sydney's Faculty of Engineering for supporting her research visit through Diversity in Engineering Academic Visitor Funding Scheme.

\vspace{-1em}


\bibliographystyle{splncs04nat}
\bibliography{haris_master_ABC,aziz_personal_ABC,ijcai24}



%
%
%
\end{document}